\newtheorem{thm}{Theorem}
\newtheorem{lemma}{Lemma}
\newtheorem{corollary}{Corollary}
\begin{document}

\begin{center}
\LARGE\bf Detecting multipartite entanglement via complete orthogonal basis
\end{center}

\begin{center}
\rm  Hui Zhao,$^{*1}$ Jia Hao,$^{1}$\  Jing Li,$^{1}$ \ Shao-Ming Fei,$^{2}$ \ Naihuan Jing,$^{3}$ \ Zhi-Xi Wang$^{2}$
\end{center}

\begin{center}
\begin{footnotesize} \sl
$^1$ Department of Mathematics, Faculty of Science, Beijing University of Technology, Beijing 100124, China

$^1$ Interdisciplinary Research Institute, Faculty of Science, Beijing University of Technology, Beijing 100124, China

$^2$ School of Mathematical Sciences,  Capital Normal University,  Beijing 100048, China

$^3$ Department of Mathematics, North Carolina State University, Raleigh, NC 27695, USA

\end{footnotesize}
\end{center}
\footnotetext{\\{*}Corresponding author, zhaohui@bjut.edu.cn}
\vspace*{2mm}

\begin{center}
\begin{minipage}{15.5cm}
\parindent 20pt\footnotesize
We study genuine tripartite entanglement and multipartite entanglement in arbitrary $n$-partite quantum systems based on complete orthogonal basis (COB). While the usual Bloch representation of a density matrix uses three types of generators, the density matrix with COB operators has one uniformed type of generators which may simplify related computations. We take the advantage of this simplicity to derive useful and operational criteria to detect genuine tripartite entanglement and multipartite entanglement. We first convert the general states to simpler forms by using the relationship between general symmetric informationally complete measurements and COB. Then we derive an operational criteria to detect genuine tripartite entanglement. We study multipartite entanglement in arbitrary dimensional multipartite systems. By providing detailed examples, we demonstrate that our criteria can detect more genuine entangled and multipartite entangled states than the previously existing criteria.
\end{minipage}
\end{center}

\begin{center}
\begin{minipage}{15.5cm}
\begin{minipage}[t]{2.3cm}{\bf Keywords:}\end{minipage}
\begin{minipage}[t]{13.1cm}
Genuine tripartite entanglement, Complete orthogonal basis, Entanglement
\end{minipage}\par\vglue8pt
\end{minipage}
\end{center}

\section{Introduction}
Quantum entanglement is a fundamental phenomenon in quantum systems and plays a crucial role in various quantum information processes, including quantum cryptography \cite{ref1,ref2}, teleportation \cite{ref3} and dense coding \cite{ref4}. The genuine multipartite entanglement (GME) has remarkable properties with particular significance in quantum computation and information processing. Therefore, the measure and detection of genuine multipartite entanglement have been the essential tasks in the theory of quantum entanglement.

There have been many results in detecting entanglement and GME, such as positive partial transposition criterion \cite{ref5,ref6}, entanglement witness criterion \cite{ref7} and realignment criterion \cite{ref8}. In \cite{ref9,ref10,ref11}, the authors provided criteria for separability and k-separability in general n-partite quantum states. In \cite{ref12,ref13}, the authors derived criteria for detecting genuine tripartite entanglement based on quantum Fisher information. By using the Bloch representation of density matrices and the norms of correlation tensors, the genuine multipartite entangled criteria were presented in \cite{ref14,ref15,ref16,ref17,ref19}. Criteria for GME based on Heisenberg-Weyl representation \cite{ref20} of density matrix have been also presented. In \cite{ref21} Gour and Kalev constructed the set of general symmetric informationally complete measurements (GSICMs) from generalized Gell-Mann matrices. In \cite{ref22} the authors studied the quantum entanglement criteria by using the GSICMs for both bipartite and multipartite systems. In \cite{ref23} the authors introduced an entanglement criterion for arbitrary high-dimensional bipartite systems in terms of the GSICMs. The authors  in \cite{ref24} studied not only the arbitrary dimensional multipartite entanglement but also the arbitrary dimensional tripartite GME by using the GSICMs. In \cite{ref25} the authors demonstrated the connection between GSICMs and a complete orthogonal basis (COB).

Many approaches are based on the Bloch representation of density matrices, which becomes more complicated for high dimensional quantum systems, partly due to that the Bloch representation relies on the Gell-Mann basis given by three kinds of basis elements: the upper, diagonal and lower matrices. Different from the Bloch representations, the GSICMs consist of uniformed basis elements. Nevertheless, since the trace relationship of GSICM operators is given by certain parameters, the related calculations become complex when the dimension increases. In contrast to GSICMs, the trace relationship of COB operators depends only on the dimension. Based on the relationship between GSICMs and COB, we find that the computational complexity can be further reduced.

In this paper, we study the genuine tripartite entanglement and multipartite entanglement by using COB operators. The paper is organized as follows. In Section 2, we first review some basic concepts and present the quantum state representation in terms of COB. Then we construct matrices by using the correlation probabilities and derive the criteria of detecting the genuine tripartite entanglement. By detailed example, we show that our criteria are more efficient than the existing ones. In section 3, we present an approach to detect multipartite entanglement for arbitrary dimensional systems, which detects more multipartite entangled states than the existing criteria by detailed examples. Conclusions are given in Section 4.

\section{Genuine tripartite entanglement criterion based on COB}
We first review some basic concepts. A set of $d^2$ positive operators in a $d$-dimensional Hilbert spaces $H^{d}$, $\{P_{\alpha}\}_{\alpha=1}^{d^2} \in \mathbb{C}^{d}$, is said to be general symmetric informationally complete measurement operators if
\begin{equation}
\sum\limits_{\alpha=1}^{d^2}P_{\alpha}=\mathbb{I},
\end{equation}
\begin{equation}
Tr((P_{\alpha})^2)=a,
\end{equation}
\begin{equation}
Tr(P_{\alpha}P_{\beta})=\frac{1-da}{d(d^2-1)},
\end{equation}
where $\mathbb{I}$ is the identity operator, $\alpha,\,\beta\in\{1,2,\ldots,d^2\}$, $\alpha\neq\beta$, the parameter $a$ satisfies $\frac{1}{d^3}<a\leq\frac{1}{d^2}$,
$a=\frac{1}{d^2}$ if and only if all $P_{\alpha}$ are rank one. Given a GSICM $\{P_{\alpha}\}_{\alpha=1}^{d^2}$ and a quantum state $\rho$, one has the probabilities of measurement outcome, $p_{\alpha}=\langle P_{\alpha}\rangle=Tr(\rho P_{\alpha})$. The quantum state $\rho$ can be expressed in terms of these probabilities \cite{ref23},
\begin{equation}
\rho=\frac{d(d^{2}-1)}{ad^3-1}\sum_{\alpha=1}^{d^2}
p_{\alpha}P_{\alpha}-\frac{d(1-ad)}{ad^3-1}\mathbb{I}.
\end{equation}

An operator basis $\{A_{\alpha}\}_{\alpha=1}^{d^2}$ of Hermitian operators is called a complete orthogonal basis (COB) if
\begin{equation}\label{5}
Tr(A_{\alpha}A_{\beta})=\frac{1}{d}\delta_{\alpha\beta},
\end{equation}
\begin{equation}
\sum\limits_{\alpha=1}^{d^2}A_{\alpha}=\mathbb{I}.
\end{equation}
From $(5)$ and $(6)$, one verifies that $Tr(A_{\alpha})=\frac{1}{d}$.
It has been shown in \cite{ref25} that for any COB $\{A_{\alpha}\}_{\alpha=1}^{d^2}$, $\lambda\in(0,~\lambda^{\ast}]$ and $\lambda^{\ast}\leq\frac{1}{\sqrt{d+1}}$, the follwoing
operators
\begin{equation}\label{Lemma 1}
P_{\alpha}=\lambda A_{\alpha}+(1-\lambda)\frac{\mathbb{I}}{d^2}
\end{equation}
give rise to a GSICM.

From $(4)$ and (\ref{Lemma 1}), we have
\begin{equation}
\rho=\frac{d\lambda(d^2-1)}{ad^3-1}\sum_{\alpha=1}^{d^2} p_{\alpha}A_{\alpha}+(\frac{1}{d}-\frac{\lambda(d^2-1)}{d(ad^3-1)})\mathbb{I}.
\end{equation}
Then the probability $\mu_{\alpha}=Tr(\rho A_{\alpha})$ of obtaining the measurement outcome $\alpha$ is
\begin{equation}
\frac{\lambda(d^2-1)}{ad^3-1}p_{\alpha}+\frac{1}{d^2}-\frac{\lambda(d^2-1)}{d^2(ad^3-1)}.
\end{equation}
Therefore, a quantum state $\rho$ can be expressed as
\begin{equation}
\rho=d\sum_{\alpha=1}^{d^2}\mu_{\alpha}A_{\alpha}.
\end{equation}
Let $T^{(1)}$ be the column vector with entries $\mu_\alpha$. We have the following lemma.

\begin{lemma} For any quantum state $\rho$, the following inequality holds
\begin{equation}
\|T^{(1)}\|^{2}\leq\frac{1}{d}.
\end{equation}
\end{lemma}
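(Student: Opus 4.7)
The plan is to relate $\|T^{(1)}\|^2 = \sum_\alpha \mu_\alpha^2$ to the purity $\mathrm{Tr}(\rho^2)$, and then invoke the standard bound $\mathrm{Tr}(\rho^2)\le 1$ for any density matrix.

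First I would take the expansion $\rho = d\sum_{\alpha=1}^{d^2}\mu_\alpha A_\alpha$ from equation (10) and square it inside a trace:
\begin{equation*}
\mathrm{Tr}(\rho^2) = d^2\sum_{\alpha,\beta=1}^{d^2}\mu_\alpha\mu_\beta\,\mathrm{Tr}(A_\alpha A_\beta).
\end{equation*}
Next I would substitute the orthogonality relation (5), namely $\mathrm{Tr}(A_\alpha A_\beta)=\tfrac{1}{d}\delta_{\alpha\beta}$, which collapses the double sum to a single sum and gives $\mathrm{Tr}(\rho^2)=d\sum_\alpha \mu_\alpha^2 = d\,\|T^{(1)}\|^2$.

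Finally, since $\rho$ is a density operator, $\mathrm{Tr}(\rho^2)\le 1$, so dividing by $d$ yields $\|T^{(1)}\|^2\le \tfrac{1}{d}$, as required. I would also note that equality is attained precisely when $\rho$ is pure.

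There is essentially no obstacle here: the only subtlety is making sure the COB expansion (10) is being used with the correct normalization factor $d$ in front, so that the factors of $1/d$ in the orthogonality relation and the factors of $d^2$ from squaring combine cleanly into a single $d$. Once that bookkeeping is in place, the bound follows immediately from the purity inequality.
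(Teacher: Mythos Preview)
Your proof is correct and follows exactly the same route as the paper: compute $\mathrm{Tr}(\rho^2)=d\|T^{(1)}\|^2$ from the COB expansion and the orthogonality relation, then apply $\mathrm{Tr}(\rho^2)\le 1$. The only difference is that you spell out the intermediate double sum explicitly, whereas the paper states the identity $\mathrm{Tr}(\rho^2)=d\|T^{(1)}\|^2$ directly; both also note the pure-state equality case.
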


\begin{proof}
For any quantum state $\rho$, we have
$Tr(\rho^{2})=Tr(\rho\rho^{\dagger})=d\|T^{(1)}\|^{2}$,
where $\dagger$ represents the conjugate transposition. Then we obtain
\begin{equation}
\|T^{(1)}\|^{2}=\frac{1}{d}Tr(\rho^{2})\leq\frac{1}{d}.
\end{equation}
When $\rho$ is a pure state, we obtain $Tr(\rho^{2})=1$ and the equality holds.
\end{proof}

For bipartite case, one can verify that the operator $A_{\alpha_{1}}^{(1)}\otimes A_{\alpha_{2}}^{(2)}$ in $H_{1}^{d_{1}}\otimes H_{2}^{d_{2}}$ is linearly independent, where $\alpha_{1}=1,2,\ldots,d_{1}^2$ and $\alpha_{2}=1,2,\ldots,d_{2}^2$. This can be seen that if $\sum_{\alpha_{1}=1}^{d_{1}^2}\sum_{\alpha_{2}=1}^{d_{2}^2}
x_{\alpha_{1}\alpha_{2}}A_{\alpha_{1}}^{(1)}\otimes A_{\alpha_{2}}^{(2)}=0$, from $(\ref{5})$ one obtains
\begin{equation}
\begin{split}
&Tr(\sum_{\alpha_{1}=1}^{d_{1}^2}\sum_{\alpha_{2}=1}^{d_{2}^2}
x_{\alpha_{1}\alpha_{2}}(A_{\alpha_{1}}^{(1)}\otimes A_{\alpha_{ 2}}^{(2)})(A_{\alpha_{1}^{'}}^{(1)}\otimes A_{\alpha_{2}^{'}}^{(2)}))\\
=&x_{11}Tr(A_{1}^{(1)}A_{\alpha_{1}^{'}}^{(1)})Tr(A_{1}^{(2)}
A_{\alpha_{2}^{'}}^{(2)})+\cdots+x_{\alpha_{1}^{'}\alpha_{2}^{'}}
Tr(A_{\alpha_{1}^{'}}^{(1)}A_{\alpha_{1}^{'}}^{(1)})
Tr(A_{\alpha_{2}^{'}}^{(2)}A_{\alpha_{2}^{'}}^{(2)})+\cdots\\
~~&+x_{d_{1}^{2}d_{2}^{2}}Tr(A_{d_{1}^{2}}^{(1)}
A_{\alpha_{1}^{'}}^{(1)})Tr(A_{d_{2}^{2}}^{(2)}A_{\alpha_{2}^{'}}^{(2)})\\
=&\frac{1}{d_{1}d_{2}}x_{\alpha_{1}^{'}\alpha_{2}^{'}}=0,
\end{split}
\end{equation}
where $\alpha_{1}^{'}=1,2,\ldots,d_{1}^2$ and $\alpha_{2}^{'}=1,2,\ldots,d_{2}^2$. Hence, $A_{\alpha_{1}}^{(1)}\otimes A_{\alpha_{2}}^{(2)}$ linearly independent and any bipartite state $\rho_{12}\in H_{1}^{d_{1}}\otimes H_{2}^{d_{2}}$ can be expressed as
\begin{equation}
\rho_{12}=d_{1}d_{2}\sum_{\alpha_{1}=1}^{d_{1}^2}
\sum_{\alpha_{2}=1}^{d_{2}^2}\mu_{\alpha_{1}\alpha_{2}}A_{\alpha_{1}}^{(1)}\otimes A_{\alpha_{2}}^{(2)},
\end{equation}
where $\mu_{\alpha_{1}\alpha_{2}}=\langle A_{\alpha_{1}}^{(1)}\otimes A_{\alpha_{2}}^{(2)}\rangle=Tr(\rho A_{\alpha_{1}}^{(1)}\otimes A_{\alpha_{2}}^{(2)})$. Let $T^{(12)}$ be the column vector with entries $\mu_{\alpha_{1}\alpha_{2}}$. We have

\begin{lemma}
For any bipartite quantum state $\rho_{12}\in H_{1}^{d_{1}}\otimes H_{2}^{d_{2}}$, the following inequality holds,
\begin{equation}
\|T^{(12)}\|^{2}\leq\frac{1}{d_{1}d_{2}}.
\end{equation}
\end{lemma}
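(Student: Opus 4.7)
The plan is to mimic the proof of Lemma 1 by computing $\mathrm{Tr}(\rho_{12}^{2})$ directly from the COB expansion of $\rho_{12}$ and matching it to $\|T^{(12)}\|^{2}$. Since the tensor product $A^{(1)}_{\alpha_1}\otimes A^{(2)}_{\alpha_2}$ inherits the orthogonality relation $\mathrm{Tr}\bigl((A^{(1)}_{\alpha_1}\otimes A^{(2)}_{\alpha_2})(A^{(1)}_{\alpha_1'}\otimes A^{(2)}_{\alpha_2'})\bigr)=\frac{1}{d_1 d_2}\delta_{\alpha_1\alpha_1'}\delta_{\alpha_2\alpha_2'}$ from (\ref{5}) (this is essentially what the linear-independence calculation above already showed), the double sum in $\mathrm{Tr}(\rho_{12}^{2})$ collapses to a single diagonal sum, giving a clean identification.

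Concretely, I would first substitute the expansion $\rho_{12}=d_1 d_2\sum_{\alpha_1,\alpha_2}\mu_{\alpha_1\alpha_2}A^{(1)}_{\alpha_1}\otimes A^{(2)}_{\alpha_2}$ into $\mathrm{Tr}(\rho_{12}\rho_{12}^{\dagger})$. Using Hermiticity of $A^{(i)}_{\alpha_i}$ and of $\rho_{12}$ (so that $\mu_{\alpha_1\alpha_2}$ is real and $\rho_{12}=\rho_{12}^{\dagger}$), I would expand into a quadruple sum and invoke the tensor-product trace identity above to reduce it to
\begin{equation*}
\mathrm{Tr}(\rho_{12}^{2})=(d_1 d_2)^{2}\sum_{\alpha_1,\alpha_2}\mu_{\alpha_1\alpha_2}^{2}\cdot\frac{1}{d_1 d_2}=d_1 d_2\,\|T^{(12)}\|^{2}.
\end{equation*}
Solving for $\|T^{(12)}\|^{2}$ gives $\|T^{(12)}\|^{2}=\frac{1}{d_1 d_2}\mathrm{Tr}(\rho_{12}^{2})$.

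Finally, I would apply the standard bound $\mathrm{Tr}(\rho_{12}^{2})\leq 1$, valid for any density matrix with equality iff $\rho_{12}$ is pure, to conclude $\|T^{(12)}\|^{2}\leq \frac{1}{d_1 d_2}$. There is no real obstacle here: the proof is essentially a bookkeeping exercise, and the only step that requires care is making sure the orthogonality relation (\ref{5}) is applied correctly to the tensor-product basis so that the cross terms vanish. This extends Lemma 1 verbatim from the single-party case to the bipartite case, and the same scheme will generalize to the $n$-partite setting used later in the paper.
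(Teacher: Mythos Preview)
Your proposal is correct and follows exactly the same approach as the paper: both compute $\mathrm{Tr}(\rho_{12}^{2})=d_{1}d_{2}\|T^{(12)}\|^{2}$ from the COB expansion and then invoke $\mathrm{Tr}(\rho_{12}^{2})\leq 1$. Your write-up simply makes explicit the use of the tensor-product orthogonality relation that the paper leaves implicit.
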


\begin{proof}
For any bipartite quantum state $\rho_{12}$, one has
$Tr(\rho_{12}^{2})=Tr(\rho_{12}\rho_{12}^{\dagger})=d_{1}d_{2}\|T^{(12)}\|^{2}$.
Since $Tr(\rho_{12}^{2})\leq1$, we have
\begin{equation}
\|T^{(12)}\|^{2}=\frac{1}{d_{1}d_{2}}Tr(\rho_{12}^{2})\leq\frac{1}{d_{1}d_{2}},
\end{equation}
where the upper bound is attained if and only if $\rho_{12}$ is a pure state.
\end{proof}

Similarly, it can be shown that any tripartite quantum state $\rho\in H_{1}^{d_{1}}\otimes H_{2}^{d_{2}}\otimes H_{3}^{d_{3}}$ can be expressed as
\begin{equation}
\rho=d_{1}d_{2}d_{3}\sum_{\alpha_{1}=1}^{d_{1}^2}
\sum_{\alpha_{2}=1}^{d_{2}^2}\sum_{\alpha_{3}=1}^{d_{3}^2}
\mu_{\alpha_{1}\alpha_{2}\alpha_{3}}A_{\alpha_{1}}^{(1)}\otimes A_{\alpha_{2}}^{(2)}\otimes A_{\alpha_{3}}^{(3)},
\end{equation}
where $\mu_{\alpha_{1}\alpha_{2}\alpha_{3}}=Tr(\rho A_{\alpha_1}^{(1)}\otimes A_{\alpha_2}^{(2)}\otimes A_{\alpha_3}^{(3)})$. A tripartite quantum state $\rho=\sum_{z}r_{z}\rho_{f}^{z}\otimes\rho_{gh}^{z}$ is said to be biseparable under the bipartition $f|gh$, where $f\neq g\neq h\in\{1,2,3\}$, $r_{z}>0$, $\sum_{z}r_{z}=1$, $\rho_{f}^{z}$ and $\rho_{gh}^{z}$ are the density matrices in $H_{f}^{d_{f}}$ and $H_{g}^{d_{g}}\otimes H_{h}^{d_{h}}$, respectively. A quantum
state is said to be genuine tripartite entangled if it cannot be written as a convex combination of biseparable states. Let $\|\cdot\|_{tr}$ stand for the trace norm defined by  $\|A\|_{tr}=\sum\limits_{i}\xi_{i}=Tr\sqrt{AA^{\dag}}=Tr\sqrt{A^{\dag}A}$ with respect to a matrix $A\in\mathbb{R}^{m\times n}$, where $\xi_{i}$ $(i=1,2,\ldots,min\{m,n\})$ are the singular values of the matrix $A$, and $\|A\|_{tr}\leq\sqrt{\mathrm{min}\{m,n\}}\|A\|$ for any matrix $A$.

For tripartite quantum state $\rho$, $c_{11}$, $c_{12}$, $c_{21}$, $c_{22}$, $c_{31}$ and $c_{32}$ are real numbers, we define
the $d_{f}^{2}\times \{d_{f}d_{g}^{2}d_{h}^{2}\}$ matrix $B^{f|gh}$ with entries given by $\mu_{\alpha_{f}\alpha_{g}\alpha_{h}}$ and $0$, where $\alpha_{f}=1,2,\ldots,d_{f}^2$, $\alpha_{g}=1,2,\ldots,d_{g}^{2}$ and $\alpha_{h}=1,2,\ldots,d_{h}^2$. For example, for $\rho\in H_{1}^{2}\otimes H_{2}^{2}\otimes H_{3}^{2}$, we have
\begin{equation}
B^{3|12}=c_{31}
\begin{bmatrix}
\mu_{111}~~\cdots~~\mu_{441}~~0~~\cdots~~0\\
\mu_{112}~~\cdots~~\mu_{442}~~0~~\cdots~~0\\
\mu_{113}~~\cdots~~\mu_{443}~~0~~\cdots~~0\\
\mu_{114}~~\cdots~~\mu_{444}~~0~~\cdots~~0
\end{bmatrix}
+c_{32}
\begin{bmatrix}
\begin{split}
&\mu_{114}~~&\cdots~~&\mu_{444}~~&0~~~~&\cdots~~&0~~\\
&\mu_{113}~~&\cdots~~&\mu_{443}~~&0~~~~&\cdots~~&0~~\\
&~~0~~&\cdots~~&~~0~~&\mu_{112}~~&\cdots~~&\mu_{442}\\
&~~0~~&\cdots~~&~~0~~&\mu_{111}~~&\cdots~~&\mu_{441}\\
\end{split}
\end{bmatrix}.
\end{equation}

\begin{thm}
If the biseparable tripartite quantum pure state $\rho\in H_{1}^{d_1}\otimes H_{2}^{d_2}\otimes H_{3}^{d_3}$ and $f\neq g\neq h\in\{1,2,3\}$, we obtain\\
(i)if $\rho$ is separable under the bipartition $f|gh$, then
\begin{equation}
\|B^{f|gh}\|_{tr}\leq|c_{f1}|\sqrt{\frac{1}{d_{f}d_{g}d_{h}}}+|c_{f2}|\sqrt{\frac{1}{d_{g}d_{h}}},
\end{equation}
(ii)if $\rho$ is separable under the bipartition $g|fh$, then
\begin{equation}
\|B^{f|gh}\|_{tr}\leq|c_{f1}|\sqrt{\mathrm{min}\{d_{f}^{2},d_{h}^{2}\}}\sqrt{\frac{1}{d_{f}d_{g}d_{h}}}+|c_{f2}|\sqrt{\frac{d_{f}}{d_{g}d_{h}}},
\end{equation}
(iii)if $\rho$ is separable under the bipartition $h|fg$, then
\begin{equation}
\|B^{f|gh}\|_{tr}\leq|c_{f1}|\sqrt{\mathrm{min}\{d_{f}^{2},d_{g}^{2}\}}\sqrt{\frac{1}{d_{f}d_{g}d_{h}}}+|c_{f2}|\sqrt{\frac{d_{f}}{d_{g}d_{h}}}.
\end{equation}
\end{thm}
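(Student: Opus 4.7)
The plan is to leverage that for a pure biseparable state the correlation tensor $\mu_{\alpha_f\alpha_g\alpha_h}$ factorizes, which gives $B^{f|gh}=c_{f1}M_1+c_{f2}M_2$ a low-rank or block-diagonal structure accessible via the trace-norm triangle inequality together with Lemmas 1 and 2.

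For part (i) I would write the pure state as $\rho=\rho_f\otimes\rho_{gh}$, so $\mu_{\alpha_f\alpha_g\alpha_h}=\mu^{(f)}_{\alpha_f}\mu^{(gh)}_{\alpha_g\alpha_h}$. The nonzero portion of $M_1$ then becomes the rank-one outer product $\mathbf{u}^{(f)}(\mathbf{v}^{(gh)})^{T}$ (with $\mu^{(gh)}$ listed in lex order and padded by zeros), whose trace norm is $\|\mathbf{u}^{(f)}\|\,\|\mathbf{v}^{(gh)}\|\le\sqrt{1/(d_fd_gd_h)}$ by Lemmas 1 and 2. From the pattern visible in the displayed example, $M_2$ is a direct sum of $d_f$ rank-one strips supported on disjoint $d_f$-row blocks and $d_f$ disjoint column blocks, so the trace norm is additive across strips and each contributes $\|\mathbf{u}_k\|\,\|\mathbf{v}^{(gh)}\|$, where the $\mathbf{u}_k$ partition the entries of $\mathbf{u}^{(f)}$. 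Cauchy--Schwarz gives $\sum_k\|\mathbf{u}_k\|\le\sqrt{d_f}\,\|\mathbf{u}^{(f)}\|\le 1$, hence $\|M_2\|_{tr}\le\|\mathbf{v}^{(gh)}\|\le\sqrt{1/(d_gd_h)}$, and the triangle inequality closes (i).

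For parts (ii) and (iii) the factorization changes: in case (ii), $\rho=\rho_g\otimes\rho_{fh}$ produces $\mu_{\alpha_f\alpha_g\alpha_h}=\mu^{(g)}_{\alpha_g}\nu^{(fh)}_{\alpha_f\alpha_h}$. Setting $N=[\nu^{(fh)}_{\alpha_f\alpha_h}]$ of size $d_f^{2}\times d_h^{2}$, the nonzero part of $M_1$ takes the Kronecker form $(\mu^{(g)})^{T}\otimes N$, whose singular values are $\|\mu^{(g)}\|\sigma_i(N)$, so $\|M_1\|_{tr}=\|\mu^{(g)}\|\,\|N\|_{tr}$. Applying $\|N\|_{tr}\le\sqrt{\min\{d_f^{2},d_h^{2}\}}\,\|N\|_{F}$ together with Lemma 2 for $\|N\|_{F}$ yields the first summand of (ii). The analogous strip decomposition of $M_2$ gives $\|M_2\|_{tr}=\|\mu^{(g)}\|\sum_k\|N_k\|_{tr}$ where the $N_k$ are row-submatrices of $N$; combining $\|N_k\|_{tr}\le\sqrt{\min\{d_f,d_h^{2}\}}\,\|N_k\|_{F}$ with the Cauchy--Schwarz step $\sum_k\|N_k\|_{F}\le\sqrt{d_f}\,\|N\|_{F}$ produces the advertised $|c_{f2}|\sqrt{d_f/(d_gd_h)}$. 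Part (iii) is symmetric, swapping the roles of $g$ and $h$ and using $N$ of size $d_f^{2}\times d_g^{2}$.

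The main technical obstacle is the bookkeeping of the strip structure of $M_2$ --- the row rearrangement and the disjoint column supports visible in the $d=2$ example are exactly what make trace-norm additivity across strips work --- and choosing the correct arguments of $\min\{\cdot,\cdot\}$ in the Frobenius-to-trace-norm inequality so that one lands on the stated constants rather than a strictly weaker bound. Everything else is routine once Lemmas 1 and 2 and the inequality $\|A\|_{tr}\le\sqrt{\min\{m,n\}}\,\|A\|_{F}$ from the text are in hand.
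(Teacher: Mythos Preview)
Your proposal is correct and follows essentially the same route as the paper: factor the correlation tensor according to the bipartition, recognize the $c_{f1}$-piece as a rank-one (resp.\ Kronecker) block and the $c_{f2}$-piece as a block-diagonal arrangement of strips, then apply the triangle inequality, $\|A\otimes B\|_{tr}=\|A\|_{tr}\|B\|_{tr}$, the bound $\|A\|_{tr}\le\sqrt{\min\{m,n\}}\|A\|_F$, and Lemmas~1--2. The only cosmetic difference is that the paper names the auxiliary matrices explicitly (their $B_2^{gh}$, $B_1^{fh}$, $B_{2i}^{fh}$ are your $M_2$-strip and $N$, $N_k$), and it invokes the $\sqrt{\min\{m,n\}}$ inequality directly on $B_2^{gh}$ and $\sum_i B_{2i}^{fh}$ where you instead spell out the equivalent Cauchy--Schwarz step $\sum_k\|\cdot\|\le\sqrt{d_f}\bigl(\sum_k\|\cdot\|^2\bigr)^{1/2}$.
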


\begin{proof}
(i) If the tripartite pure state $\rho$ is separable under bipartition $f|gh$, it can be expressed as
$\rho=\rho_{f}\otimes\rho_{gh}$, where
\begin{equation}
\rho_{f}=d_{f}\sum_{\alpha_{f}=1}^{d_{f}^2}\mu_{\alpha_{f}}A_{\alpha_{f}}^{(f)},
\end{equation}
\begin{equation}
\rho_{gh}=d_{g}d_{h}\sum_{\alpha_{g}=1}^{d_{g}^2}
\sum_{\alpha_{h}=1}^{d_{h}^2}\mu_{\alpha_{g}\alpha_{h}}A_{\alpha_{g}}^{(g)}\otimes A_{\alpha_{h}}^{(h)}.
\end{equation}
Let $T^{(f)}$ and $T^{(gh)}$ be the column vectors with entries $\mu_{\alpha_{f}}$ and $\mu_{\alpha_{g}\alpha_{h}}$, respectively. Thus we have $B^{f|gh}=c_{f1}(T^{(f)})(T^{(gh)}~~0~~\cdots~~0)^{\dagger}+c_{f2}B_{2}^{gh}\otimes(T^{(gh)})^{\dagger}$, where $B_{2}^{gh}$ is a $d_{f}^{2}\times d_{f}$ matrix as follows,
\begin{equation}
B_2^{gh}=\begin{bmatrix}
\begin{split}
&~~~\mu_{d_{f}^{2}}~~&\cdots~~&\mu_{d_{f}^{2}-d_f+1}~~&\cdots~~&~~0~~&\cdots~~&~~0~~\\
&~~~~\vdots~~~&~~~&~~~~\vdots~~~&\ddots~~&~~\vdots~~&~~~&~~\vdots\\
&~~~~0~~&\cdots~~&~~~~0~~&\cdots~~&~\mu_{d_f}~~&~\cdots~~&~\mu_{1}\\
\end{split}
\end{bmatrix}^{T}.
\end{equation}
For matrix $A$, $B$ and vectors $|a\rangle$ and $|b\rangle$, we use $\|A\otimes B\|_{tr}=\|A\|_{tr}\|B\|_{tr}$,  $\|A+B\|_{tr}\leq\|A\|_{tr}+\|B\|_{tr}$ and $\||a\rangle\langle b|\|_{tr}=\||a\rangle\|\||b\rangle\|$, then we obtain
\begin{equation}
\begin{split}
\|B^{f|gh}\|_{tr}&\leq|c_{f1}|\|(T^{(f)})(T^{(gh)}~~0~~\cdots~~0)^{\dagger}\|_{tr}+|c_{f2}|\|B_{2}^{gh}\otimes(T^{(gh)})^{\dagger}\|_{tr}\\
&=|c_{f1}|\|T^{(f)}\|\|T^{(gh)}\|+|c_{f2}|\|B_{2}^{gh}\|_{tr}\|(T^{(gh)})^{\dagger}\|_{tr}\\
&\leq|c_{f1}|\|T^{(f)}\|\|T^{(gh)}\|+|c_{f2}|\sqrt{d_f}\|T^{(f)}\|\|T^{(gh)}\|\\
&\leq|c_{f1}|\sqrt{\frac{1}{d_{f}}}\sqrt{\frac{1}{d_{g}d_{h}}}+|c_{f2}|\sqrt{d_f}\sqrt{\frac{1}{d_{f}}}\sqrt{\frac{1}{d_{g}d_{h}}}\\
&=|c_{f1}|\sqrt{\frac{1}{d_{f}d_{g}d_{h}}}+|c_{f2}|\sqrt{\frac{1}{d_{g}d_{h}}},
\end{split}
\end{equation}
where we have used Lemma 1 and Lemma 2 in the third inequality.

(ii) If the tripartite pure state $\rho$ is separable under bipartition $g|fh$, it can be expressed as
$\rho=\rho_{g}\otimes\rho_{fh}$, where
\begin{equation}
\rho_{g}=d_{g}\sum_{\alpha_{g}=1}^{d_{g}^2}\mu_{\alpha_{g}}A_{\alpha_{g}}^{(g)},
\end{equation}
\begin{equation}
\rho_{fh}=d_{f}d_{h}\sum_{\alpha_{f}=1}^{d_{f}^2}
\sum_{\alpha_{h}=1}^{d_{h}^2}\mu_{\alpha_{f}\alpha_{h}}A_{\alpha_{f}}^{(f)}\otimes A_{\alpha_{h}}^{(h)}.
\end{equation}
Let $T^{(g)}$ and $T^{(fh)}$ be the column vectors with entries $\mu_{\alpha_{g}}$ and $\mu_{\alpha_{f}\alpha_{h}}$, respectively. Then we obtain $B^{f|gh}=c_{f1}(T^{(g)}~~0~~\cdots~~0)^{\dagger}\otimes B_{1}^{fh}+c_{f2}(T^{(g)})^{\dagger}\otimes (B_{21}^{fh}+B_{22}^{fh}+\cdots+B_{2d_{f}}^{fh})$, where $B_{1}^{fh}$ is a $d_{f}^{2}\times d_{h}^{2}$ matrix, $B_{21}^{fh}$, $B_{22}^{fh}$, $\ldots$, $B_{2d_{f}}^{fh}$ are $d_{f}^{2}\times \{d_{f}d_{h}^{2}\}$ matrices as follows,
\small\begin{equation}
B_1^{fh}=\begin{bmatrix}
\begin{split}
&\mu_{11}~~&\cdots~~&\mu_{1d_{h}^{2}}\\
&\mu_{21}~~&\cdots~~&\mu_{2d_{h}^{2}}\\
&~~\vdots~~&\ddots~~&~~\vdots\\
&\mu_{d_f^21}~~&\cdots~~&\mu_{d_{f}^{2}d_{h}^{2}}\\
\end{split}
\end{bmatrix},
B_{2i}^{fh}=\begin{bmatrix}
\begin{split}
&0&\cdots~~&~~~~~0&\cdots~~~~&~~~~~~0~~&\cdots~~&~~0~~\\
&\vdots&~\ddots~~&~~~~~\vdots~~~&~~~~~~&~~~~~~\vdots~~&~~~~~&~~\vdots\\
&0&\cdots~~&~~~\mu_{id_f,1}~~&\cdots~~~~&~~~\mu_{id_f,d_{h}^{2}}~~&\cdots~~&~~0\\
&\vdots&~~~~~&~~~~~\vdots~~~&~~~~~~&~~~~~~\vdots~~&~~~~~&~~\vdots\\
&0&\cdots~~&\mu_{(i-1)d_f+1,1}~~&\cdots~~~~&\mu_{(i-1)d_f+1,d_{h}^{2}}~~&\cdots~~&~~0\\
&\vdots&~~~~~&~~~~~~\vdots~~~&~~~~~~&~~~~~~\vdots~~&~\ddots~~&~~\vdots\\
&0&\cdots~~&~~~~~~0&\cdots~~~~&~~~~~~0~~&\cdots~~&~~0~~\\
\end{split}
\end{bmatrix},
\end{equation}
\small
$i=1,2,\ldots,d_f$. For matrix $A$ and $B$, we use $\|A\otimes B\|_{tr}=\|A\|_{tr}\|B\|_{tr}$, $\|A+B\|_{tr}\leq\|A\|_{tr}+\|B\|_{tr}$ and $\|A_{m\times n}\|_{tr}\leq \sqrt{\mathrm{min}\{m,n\}}\|A\|$, thus
\begin{equation}
\begin{split}
\|B^{f|gh}\|_{tr}&\leq|c_{f1}|\|(T^{(g)}~~0~~\cdots~~0)^{\dagger}\otimes B^{fh}\|_{tr}+|c_{f2}|\|(T^{(g)})^{\dagger}\otimes (B_{21}^{fh}+B_{22}^{fh}+\cdots+B_{2d_{f}}^{fh})\|_{tr}\\
&=|c_{f1}|\|(T^{(g)})^{\dagger}\|_{tr}\|B^{fh}\|_{tr}+|c_{f2}|\|(T^{(g)})^{\dagger}\|_{tr}\|(B_{21}^{fh}+B_{22}^{fh}+\cdots+B_{2d_{f}}^{fh})\|_{tr}\\
&\leq|c_{f1}|\sqrt{\mathrm{min}\{d_{f}^{2},d_{h}^{2}\}}\|T^{(g)}\|\|T^{(fh)}\|+|c_{f2}|\|T^{(g)}\|d_{f}\|T^{(fh)}\|\\
&\leq|c_{f1}|\sqrt{\mathrm{min}\{d_{f}^{2},d_{h}^{2}\}}\sqrt{\frac{1}{d_{g}}}\sqrt{\frac{1}{d_{f}d_{h}}}+|c_{f2}|\sqrt{\frac{1}{d_{g}}}d_{f}\sqrt{\frac{1}{d_{f}d_{h}}}\\
&=|c_{f1}|\sqrt{\mathrm{min}\{d_{f}^{2},d_{h}^{2}\}}\sqrt{\frac{1}{d_{f}d_{g}d_{h}}}+|c_{f2}|\sqrt{\frac{d_{f}}{d_{g}d_{h}}}.
\end{split}
\end{equation}

(iii) Using similar method, if $\rho$ is separable under the bipartition $h|fg$, we obtain $B^{f|gh}=c_{f1}[B_1^{fg}~~0_{d_{f}^{2}\times\{(d_f-1)d_{g}^{2}\}}]\otimes(T^{(h)})^{\dagger}+c_{f2}(B_{21}^{fg}+B_{22}^{fg}+\cdots+B_{2d_{f}}^{fg})\otimes(T^{(h)})^{\dagger}$, where $B_1^{fg}$ is a $d_{f}^{2}\times d_{g}^{2}$ matrix, $B_{21}^{fg}$, $B_{22}^{fg}$, $\ldots$, $B_{2d_{f}}^{fg}$ are $d_{f}^{2}\times \{d_fd_{g}^{2}\}$ matrix. Thus
\begin{equation}
\|B^{f|gh}\|_{tr}\leq|c_{f1}|\sqrt{\mathrm{min}\{d_{f}^{2},d_{g}^{2}\}}\sqrt{\frac{1}{d_{f}d_{g}d_{h}}}+|c_{f2}|\sqrt{\frac{d_{f}}{d_{g}d_{h}}}.
\end{equation}
\end{proof}

We are now ready to consider genuine tripartite entanglement. Set $Q_1=\mathrm{max}\{|c_{11}|\sqrt{\frac{1}{d_{1}d_{2}d_{3}}}+|c_{12}|\sqrt{\frac{1}{d_{2}d_{3}}},|c_{11}|\sqrt{\mathrm{min}\{d_{1}^{2},d_{3}^{2}\}}\sqrt{\frac{1}{d_{1}d_{2}d_{3}}}+|c_{12}|\sqrt{\frac{d_1}{d_{2}d_{3}}},|c_{11}|\sqrt{\mathrm{min}\{d_{1}^{2},d_{2}^{2}\}}\sqrt{\frac{1}{d_{1}d_{2}d_{3}}}++|c_{12}|\sqrt{\frac{d_1}{d_{2}d_{3}}}\}$, $Q_2=\mathrm{max}\{|c_{21}|\sqrt{\frac{1}{d_{1}d_{2}d_{3}}}+|c_{22}|\sqrt{\frac{1}{d_{1}d_{3}}},|c_{21}|\sqrt{\mathrm{min}\{d_{2}^{2},d_{3}^{2}\}}\sqrt{\frac{1}{d_{1}d_{2}d_{3}}}+|c_{22}|\sqrt{\frac{d_{2}}{d_{1}d_{3}}},|c_{21}|\sqrt{\mathrm{min}\{d_{2}^{2},d_{1}^{2}\}}\\\sqrt{\frac{1}{d_{1}d_{2}d_{3}}}+|c_{22}|\sqrt{\frac{d_{2}}{d_{1}d_{3}}}\}$, $Q_3=\mathrm{max}\{|c_{31}|\sqrt{\frac{1}{d_{1}d_{2}d_{3}}}+|c_{32}|\sqrt{\frac{1}{d_{1}d_{2}}},|c_{31}|\sqrt{\mathrm{min}\{d_{3}^{2},d_{2}^{2}\}}\sqrt{\frac{1}{d_{1}d_{2}d_{3}}}+|c_{32}|\\\sqrt{\frac{d_{3}}{d_{1}d_{2}}},|c_{31}|\sqrt{\mathrm{min}\{d_{3}^{2},d_{1}^{2}\}}\sqrt{\frac{1}{d_{1}d_{2}d_{3}}}+|c_{32}|\sqrt{\frac{d_{3}}{d_{1}d_{2}}}\}$ and $B(\rho)=\frac{1}{3}(\|B^{1|23}\|_{tr}+\|B^{2|13}\|_{tr}+\|B^{3|12}\|_{tr})$, where $c_{11}$, $c_{12}$, $c_{21}$, $c_{22}$, $c_{31}$ and $c_{32}$ are real numbers. We have the following Theorem.

\begin{thm}
A quantum mixed state $\rho\in H_{1}^{d_{1}}\otimes H_{2}^{d_{2}}\otimes H_{3}^{d_{3}}$ is genuine tripartite entangled if $B(\rho)>\frac{1}{3}(Q_1+Q_2+Q_3)$.
\end{thm}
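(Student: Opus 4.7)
The plan is to prove the contrapositive: show that every biseparable mixed state $\rho$ satisfies $B(\rho)\leq\tfrac{1}{3}(Q_1+Q_2+Q_3)$, so that violating this inequality forces $\rho$ to be genuine tripartite entangled. The argument reduces the mixed-state case to the pure biseparable case via convexity, invokes Theorem 1 for each pure component, and then observes that the three bounds in Theorem 1 are exactly the three expressions whose maximum defines $Q_f$.

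First I would write a biseparable mixed state as
\begin{equation*}
\rho=\sum_{k}p_{k}|\psi_{k}\rangle\langle\psi_{k}|,
\end{equation*}
where each $|\psi_{k}\rangle$ is pure and biseparable under some bipartition $\sigma_{k}\in\{1|23,\,2|13,\,3|12\}$, with $p_{k}\geq 0$ and $\sum_{k}p_{k}=1$. The matrices $B^{f|gh}$ are built from the correlation probabilities $\mu_{\alpha_{f}\alpha_{g}\alpha_{h}}=\mathrm{Tr}(\rho\,A_{\alpha_{f}}^{(f)}\otimes A_{\alpha_{g}}^{(g)}\otimes A_{\alpha_{h}}^{(h)})$, which are linear in $\rho$; hence each entry of $B^{f|gh}$ is linear in $\rho$, giving
\begin{equation*}
B^{f|gh}(\rho)=\sum_{k}p_{k}\,B^{f|gh}(|\psi_{k}\rangle\langle\psi_{k}|).
\end{equation*}
Applying the triangle inequality for the trace norm yields
\begin{equation*}
\|B^{f|gh}(\rho)\|_{tr}\leq\sum_{k}p_{k}\,\|B^{f|gh}(|\psi_{k}\rangle\langle\psi_{k}|)\|_{tr}.
\end{equation*}

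Next, for each pure component $|\psi_{k}\rangle\langle\psi_{k}|$ I would apply Theorem 1 according to which bipartition $\sigma_{k}$ makes it separable, producing one of the three upper bounds appearing in parts (i), (ii), (iii). By the very definition of $Q_{f}$ as the maximum of those three expressions (with indices $f,g,h$ permuted through $\{1,2,3\}$), each of these pure-state bounds is dominated by $Q_{f}$. Therefore
\begin{equation*}
\|B^{f|gh}(\rho)\|_{tr}\leq\sum_{k}p_{k}\,Q_{f}=Q_{f},
\end{equation*}
for every $f\in\{1,2,3\}$. Averaging over $f$ gives
\begin{equation*}
B(\rho)=\tfrac{1}{3}\bigl(\|B^{1|23}\|_{tr}+\|B^{2|13}\|_{tr}+\|B^{3|12}\|_{tr}\bigr)\leq\tfrac{1}{3}(Q_{1}+Q_{2}+Q_{3}),
\end{equation*}
so any $\rho$ violating this bound cannot be biseparable and must be genuine tripartite entangled.

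The step that requires the most care is the bookkeeping in bounding $\|B^{f|gh}(|\psi_{k}\rangle\langle\psi_{k}|)\|_{tr}$ when the pure component is separable under a bipartition different from $f|gh$: one must track which of the three cases of Theorem 1 applies for every pair $(f,\sigma_{k})$ and confirm that the resulting expression is one of the three terms inside the $\max$ defining $Q_{f}$. Once the definition of $Q_{f}$ is matched exactly to the three bounds of Theorem 1, the rest of the argument is routine linearity and triangle-inequality manipulation.
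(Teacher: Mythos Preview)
Your proposal is correct and follows essentially the same route as the paper: decompose a biseparable $\rho$ as a convex combination of pure biseparable states, use linearity of $B^{f|gh}$ and the triangle inequality for the trace norm, apply Theorem 1 case by case to each pure component, and absorb the three resulting bounds into $Q_{f}$ before averaging. The only cosmetic difference is that the paper groups the decomposition into three separate sums (one per bipartition) rather than a single sum indexed by $\sigma_{k}$, and it does not state as explicitly as you do that the components must be taken pure so that Theorem 1 applies.
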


\begin{proof}
If $\rho$ is a biseparable, one has
$\rho=\sum_{i}o_{i}\rho_{i}^{1}\otimes\rho_{i}^{23}
+\sum_{j}r_{j}\rho_{j}^{2}\otimes\rho_{j}^{13}+\sum_{k}s_{k}\rho_{k}^{3}\otimes\rho_{k}^{12}$ with $0\leq o_{i},r_{j},s_{k}\leq1$ and $\sum_{i}o_{i}+\sum_{j}r_{j}+\sum_{k}s_{k}=1$. By Theorem 1, we have that
\begin{equation}
\begin{split}
B(\rho)=&\frac{1}{3}\left(\|B^{1|23}(\rho)\|_{tr}+\|B^{2|13}(\rho)\|_{tr}+\|B^{3|12}(\rho)\|_{tr}\right)\\
=&\frac{1}{3}\left[\|B^{1|23}\left(\sum_{i}o_{i}\rho_{i}^{1}\otimes\rho_{i}^{23}+\sum_{j}r_{j}\rho_{j}^{2}\otimes\rho_{j}^{13}+\sum_{k}s_{k}\rho_{k}^{3}\otimes\rho_{k}^{12}\right)\|_{tr}\right.\\
&\left.+\|B^{2|13}\left(\sum_{i}o_{i}\rho_{i}^{1}\otimes\rho_{i}^{23}+\sum_{j}r_{j}\rho_{j}^{2}\otimes\rho_{j}^{13}+\sum_{k}s_{k}\rho_{k}^{3}\otimes\rho_{k}^{12}\right)\|_{tr}\right.\\
&\left.+\|B^{3|12}\left(\sum_{i}o_{i}\rho_{i}^{1}\otimes\rho_{i}^{23}+\sum_{j}r_{j}\rho_{j}^{2}\otimes\rho_{j}^{13}+\sum_{k}s_{k}\rho_{k}^{3}\otimes\rho_{k}^{12}\right)\|_{tr}\right]\\
\leq&\frac{1}{3}\left[\sum_{i}o_{i}\|B^{1|23}(\rho_{i}^{1}\otimes\rho_{i}^{23})\|_{tr}+\sum_{j}r_{j}\|B^{1|23}(\rho_{j}^{2}\otimes\rho_{j}^{13})\|_{tr}+\sum_{k}s_{k}\|B^{1|23}(\rho_{k}^{3}\otimes\rho_{k}^{12})\|_{tr}\right.\\
&\left.+\sum_{i}o_{i}\|B^{2|13}(\rho_{i}^{1}\otimes\rho_{i}^{23})\|_{tr}+\sum_{j}r_{j}\|B^{2|13}(\rho_{j}^{2}\otimes\rho_{j}^{13})\|_{tr}+\sum_{k}s_{k}\|B^{2|13}(\rho_{k}^{3}\otimes\rho_{k}^{12})\|_{tr}\right.\\
&\left.+\sum_{i}o_{i}\|B^{3|12}(\rho_{i}^{1}\otimes\rho_{i}^{23})\|_{tr}+\sum_{j}r_{j}\|B^{3|12}(\rho_{j}^{2}\otimes\rho_{j}^{13})\|_{tr}+\sum_{k}s_{k}\|B^{3|12}(\rho_{k}^{3}\otimes\rho_{k}^{12})\|_{tr}\right]\\
\leq&\frac{1}{3}\left[\left(\sum_{i}o_{i}+\sum_{j}r_{j}+\sum_{k}s_{k}\right)Q_1+\left(\sum_{i}o_{i}+\sum_{j}r_{j}+\sum_{k}s_{k}\right)Q_2\right.\\
&\left.+\left(\sum_{i}o_{i}+\sum_{j}r_{j}+\sum_{k}s_{k}\right)Q_3\right]=\frac{1}{3}(Q_1+Q_2+Q_3).\\
\end{split}
\end{equation}

Consequently, if $B(\rho)>\frac{1}{3}(Q_1+Q_2+Q_3)$, $\rho$ is genuine tripartite entangled.
\end{proof}

In particular, for states that are invariant under any permutation of basis, $\rho=\rho^p=p \rho p^{\dagger}$, where $p$ is any permutation operator. Let $d_1=d_2=d_3=d$, $c_{11}=c_{21}=c_{31}$ and $c_{12}=c_{22}=c_{32}$, the following corollary can be deduced.

\begin{corollary}
If a permutational invariant mixed state is biseparable, then we have
\begin{equation}B(\rho)=\frac{1}{3}(\|B^{1|23}\|_{tr}+\|B^{2|13}\|_{tr}+\|B^{3|12}\|_{tr})\leq \frac{1}{3}(c_{11}\sqrt{\frac{1}{d^3}}+c_{12}\frac{1}{d}+2c_{11}\sqrt{\frac{1}{d}}+2c_{12}\sqrt{\frac{1}{d}}).
\end{equation}
Therefore if $B(\rho)>\frac{1}{3}(c_{11}\sqrt{\frac{1}{d^3}}+c_{12}\frac{1}{d}+2c_{11}\sqrt{\frac{1}{d}}+2c_{12}\sqrt{\frac{1}{d}})$, $\rho$ is genuine tripartite entangled.

\end{corollary}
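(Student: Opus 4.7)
The plan is to refine the proof of Theorem 2 for this symmetric setting by replacing the coarse bound $\|B^{f|gh}(\cdot)\|_{tr}\le Q_{f}$ (which is a maximum over the three cases of Theorem 1) with the specific Theorem 1 case matched to each summand of the biseparable decomposition.

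First I would fix a biseparable decomposition
$\rho=\sum_{i}o_{i}\rho_{i}^{1}\otimes\rho_{i}^{23}+\sum_{j}r_{j}\rho_{j}^{2}\otimes\rho_{j}^{13}+\sum_{k}s_{k}\rho_{k}^{3}\otimes\rho_{k}^{12}$
with $\sum_{i}o_{i}+\sum_{j}r_{j}+\sum_{k}s_{k}=1$, and for each of the three bipartitions expand $\|B^{f|gh}(\rho)\|_{tr}$ by the triangle inequality into nine trace-norm terms. Next I would specialize the three cases of Theorem 1 to $d_{1}=d_{2}=d_{3}=d$ with $c_{11}=c_{21}=c_{31}$ and $c_{12}=c_{22}=c_{32}$. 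Case (i) then collapses to $A:=c_{11}\sqrt{1/d^{3}}+c_{12}/d$, while both cases (ii) and (iii) reduce to the common value $B:=c_{11}\sqrt{1/d}+c_{12}\sqrt{1/d}$, since $\min\{d^{2},d^{2}\}=d^{2}$ forces $\sqrt{\min\{d_{f}^{2},d_{g}^{2}\}/(d_{f}d_{g}d_{h})}=1/\sqrt{d}$ and $\sqrt{d_{f}/(d_{g}d_{h})}=1/\sqrt{d}$.

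The key combinatorial observation is that among the three bipartitions $f|gh\in\{1|23,\,2|13,\,3|12\}$, each summand of $\rho$ matches exactly one bipartition, contributing its weight times $A$ to that single $\|B^{f|gh}\|_{tr}$, while it fails to match the other two, contributing its weight times $B$ to each. Summing over all three $f|gh$ and dividing by $3$ therefore gives $3B(\rho)\le\bigl(\sum_{i}o_{i}+\sum_{j}r_{j}+\sum_{k}s_{k}\bigr)(A+2B)=A+2B$, which is exactly the stated bound. The genuine tripartite entanglement criterion then follows by contrapositive.

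The main obstacle is noticing that one cannot simply reuse the $Q_{f}$-bound of Theorem 2: in the symmetric setting $Q_{f}$ collapses to the larger value $B$, yielding only $B(\rho)\le B$, which is strictly weaker than $\tfrac{1}{3}(A+2B)$. Instead, one has to go back into the proof of Theorem 2 and track which Theorem 1 case applies to each summand, so that the ``one $A$ plus two $B$'s per summand'' pattern emerges across the average of the three $\|B^{f|gh}\|_{tr}$. Once this book-keeping is set up, the reduction of Theorem 1's three cases to just two distinct values $A,B$ makes the remaining arithmetic routine.
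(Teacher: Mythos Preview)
Your proposal is correct. The paper does not supply a separate proof of the Corollary; it is stated immediately after Theorem~2 as a direct specialization (``the following corollary can be deduced''), with the implicit understanding that one reuses the \emph{argument} of Theorem~2 rather than its coarser conclusion, and your ``one $A$ plus two $B$'s per summand'' bookkeeping across the nine-term expansion is exactly that argument carried out in the symmetric setting $d_1=d_2=d_3=d$, $c_{11}=c_{21}=c_{31}$, $c_{12}=c_{22}=c_{32}$.

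One remark worth recording: your proof never actually invokes the permutation-invariance hypothesis on $\rho$. The inequality $B(\rho)\le\tfrac{1}{3}(A+2B)$ follows for \emph{any} biseparable tripartite state once the dimensions and coefficients are equalized, since the pattern ``case~(i) once, cases~(ii)/(iii) twice'' in Theorem~1 is forced purely by the combinatorics of the three bipartitions, independently of any symmetry of $\rho$. So your argument in fact yields a slightly stronger statement than the Corollary as written.
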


To illustrate our criterion, we consider the case of $d=2$. We use the four matrices provided in \cite{ref25} through construction 1,
\begin{equation}\label{33}
\begin{split}
& A_{1}=
\begin{bmatrix}
\frac{1}{2} & \frac{1-\mathbf{i}}{4} \\
\frac{1+\mathbf{i}}{4} & 0 \\
\end{bmatrix},~~~~
A_{2}=
\begin{bmatrix}
0 & \frac{-1-\mathbf{i}}{4} \\
\frac{-1+\mathbf{i}}{4} & \frac{1}{2} \\
\end{bmatrix},\\
& A_{3}=
\begin{bmatrix}
0 & \frac{1+\mathbf{i}}{4} \\
\frac{1-\mathbf{i}}{4} & \frac{1}{2} \\
\end{bmatrix},~~~~
A_{4}=
\begin{bmatrix}
\frac{1}{2} & \frac{-1+\mathbf{i}}{4} \\
\frac{-1-\mathbf{i}}{4} & 0 \\
\end{bmatrix},
\end{split}
\end{equation}
where $\mathbf{i}=\sqrt{-1}$.

\textbf{Example 1} Consider the $2\times2\times2$ quantum state $\rho_{GHZ}$,
\begin{equation}\label{30}
\rho_{GHZ}=\frac{x}{8}\mathbb{I}_{8}+(1-x)|GHZ\rangle\langle GHZ|,~~0\leq x\leq1,
\end{equation}
where $|GHZ\rangle=\frac{1}{\sqrt{2}}[|000\rangle+|111\rangle]$. Let $c_{11}=c_{21}=c_{31}=1$ and $c_{12}=c_{22}+c_{32}=0$, by Corollary 1 we have $f_{1}(x)=B(\rho)-\frac{1}{3}(2\sqrt{\frac{1}{2}}+\sqrt{\frac{1}{8}})=\frac{\sqrt{x^{2}-2x+2}}{8}
+\frac{3\sqrt{2}|x-1|}{8}-\frac{1}{3}(2\sqrt{\frac{1}{2}}+\sqrt{\frac{1}{8}})>0$. Thus $\rho$ is genuine tripartite entangled for $0\leq x<0.1919$. The Theorem 1 in \cite{ref16} implies that $\rho$ is genuine tripartite entangled if $g_{1}(x)=2-2x-\sqrt{3}>0$, i.e., $0\leq x<0.134$. Theorem 1 in \cite{ref26} shows that $\rho$ is genuine tripartite entangled if $g_{2}(x)=C_3(\rho)-\frac{1}{2}\sqrt{6-25x+\frac{25}{2}x^2}\leq0$, i.e., $0\leq x<0.08349$. Our result clearly outperforms these two results, see Figure 1.
\begin{figure}[h]
  \centering
  \includegraphics[width=14cm]{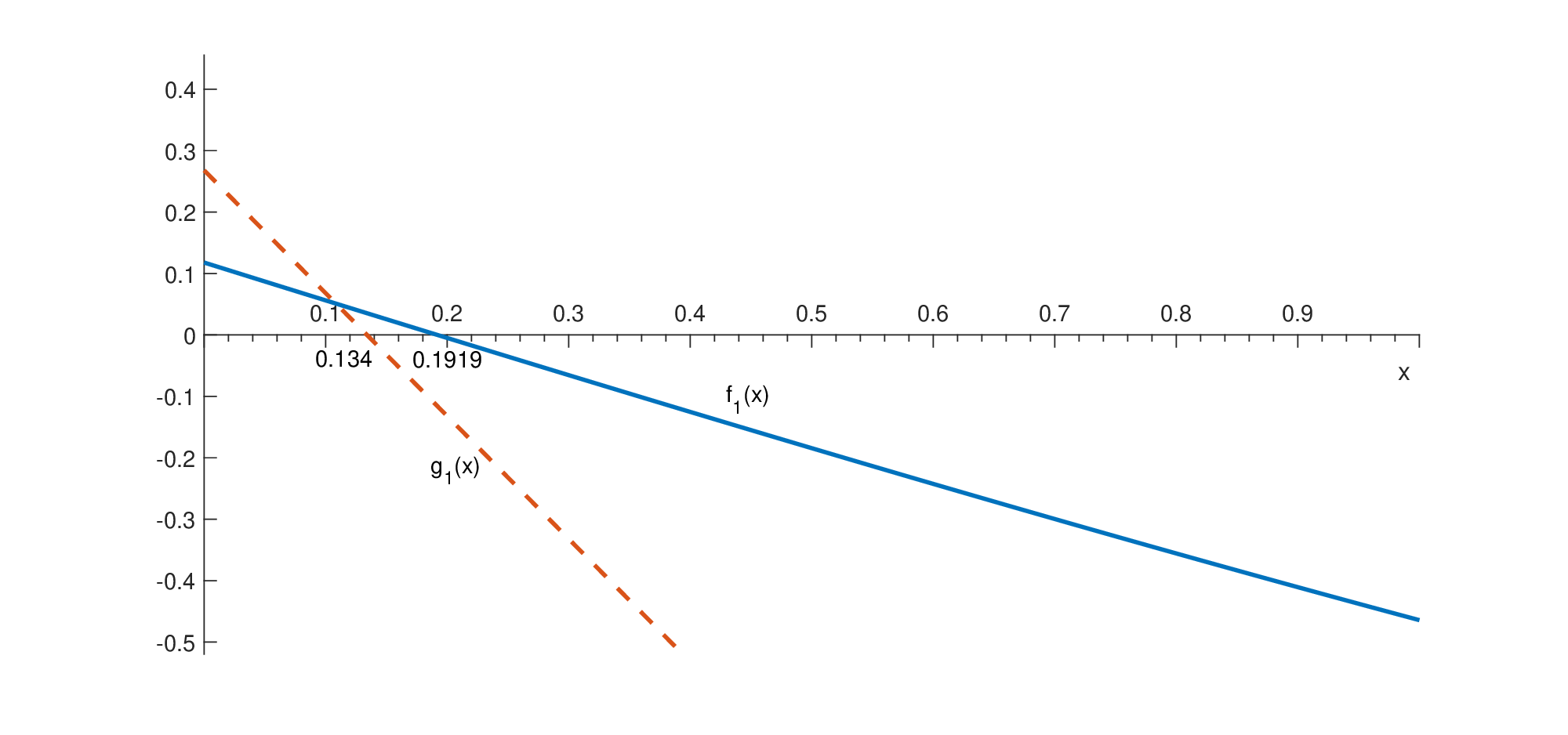}
  \caption{$f_{1}(x)$ from our result (solid blue line) and $g_{1}(x)$ in \cite{ref16} (dotted red line).}\label{1}
\end{figure}

The range of genuine entanglement for (\ref{30}) given in \cite{ref29} is better than our conclusion. However, \cite{ref29} only considered the qubit case. Our method can detect genuine tripartite entanglement for arbitrary dimensions. We take the following $3\times 3\times 2$ quantum state to illustrate our method. We use the matrices provided in \cite{ref25} through construction 2. When $d=3$,
\begin{equation}\footnotesize
\begin{split}
& A_1=\left[\begin{array}{ccc}
-\frac{2}{9} & \frac{-7+3 \mathbf{i} \sqrt{7}}{84 \sqrt{3}} & \frac{\mathbf{i}}{6 \sqrt{5}}-\frac{1}{6 \sqrt{7}} \\
\frac{-7-3 \mathbf{i}\sqrt{7}}{84 \sqrt{3}} & \frac{1}{9} & -\frac{-5 \mathbf{i}+\sqrt{15}}{30 \sqrt{2}} \\
-\frac{\mathbf{i}}{6 \sqrt{5}}-\frac{1}{6 \sqrt{7}} & -\frac{5 \mathbf{i}+\sqrt{15}}{30 \sqrt{2}} & \frac{4}{9}
\end{array}\right], \quad A_2=\left[\begin{array}{cccc}
\frac{1}{9} & \frac{2}{3 \sqrt{3}} & 0 \\
\frac{2}{3 \sqrt{3}} & \frac{1}{9} & 0 \\
0 & 0 & \frac{1}{9}
\end{array}\right], \\
& A_3=\left[\begin{array}{ccc}
\frac{1}{9} & \frac{-1-3\mathbf{i} \sqrt{7}}{12 \sqrt{3}} & 0 \\
\frac{-1+3 \mathbf{i} \sqrt{7}}{12 \sqrt{3}} & \frac{1}{9} & 0 \\
0 & 0 & \frac{1}{9}
\end{array}\right], \quad A_4=\left[\begin{array}{ccc}
\frac{1}{9} & \frac{-7+3 \mathbf{i} \sqrt{7}}{84 \sqrt{3}} & \frac{1}{\sqrt{7}} \\
\frac{-7-3 \mathbf{i} \sqrt{7}}{84 \sqrt{3}} & \frac{1}{9} & 0 \\
\frac{1}{\sqrt{7}} & 0 & \frac{1}{9}
\end{array}\right], \\
&A_5=\left[\begin{array}{ccc}
\frac{1}{9} & \frac{-7+3 \mathbf{i} \sqrt{7}}{84 \sqrt{3}} & -\frac{\mathbf{i}\sqrt{5}}{6}-\frac{1}{6 \sqrt{7}} \\
\frac{-7-3 \mathbf{i}\sqrt{7}}{84 \sqrt{3}} & \frac{1}{9} & 0 \\
\frac{\mathbf{i} \sqrt{5}}{6}-\frac{1}{6 \sqrt{7}} & 0 & \frac{1}{9}
\end{array}\right] , \quad  A_6=\left[\begin{array}{ccc}
\frac{1}{9} & \frac{-7+3 \mathbf{i} \sqrt{7}}{84 \sqrt{3}} & \frac{\mathbf{i}}{6 \sqrt{5}}-\frac{1}{6 \sqrt{7}} \\
\frac{-7-3 \mathbf{i}\sqrt{7}}{84 \sqrt{3}} & \frac{1}{9} & \sqrt{\frac{2}{15}} \\
-\frac{\mathbf{i}}{6 \sqrt{5}}-\frac{1}{6 \sqrt{7}} & \sqrt{\frac{2}{15}} & \frac{1}{9}
\end{array}\right]\\
&A_7=\left[\begin{array}{ccc}
\frac{1}{9} & \frac{-7+3 \mathbf{i} \sqrt{7}}{84 \sqrt{3}} & \frac{\mathbf{i}}{6 \sqrt{5}}-\frac{1}{6 \sqrt{7}} \\
\frac{-7-3 \mathbf{i}\sqrt{7}}{84 \sqrt{3}} & \frac{1}{9} & -\frac{15 \mathbf{i}+\sqrt{15}}{30 \sqrt{2}} \\
-\frac{\mathbf{i}}{6 \sqrt{5}}-\frac{1}{6 \sqrt{7}} & -\frac{-15 \mathbf{i}+\sqrt{15}}{30 \sqrt{2}} & \frac{1}{9}
\end{array}\right], \quad  A_8=\left[\begin{array}{ccc}
\frac{4}{9} & \frac{-7+3\mathbf{i} \sqrt{7}}{84 \sqrt{3}} & \frac{\mathbf{i}}{6 \sqrt{5}}-\frac{1}{6 \sqrt{7}} \\
\frac{-7-3\mathbf{i}\sqrt{7}}{84 \sqrt{3}} & -\frac{2}{9} & -\frac{-5 \mathbf{i}+\sqrt{15}}{30 \sqrt{2}} \\
-\frac{\mathbf{i}}{6 \sqrt{5}}-\frac{1}{6 \sqrt{7}} & -\frac{5 \mathbf{i}+\sqrt{15}}{30 \sqrt{2}} & \frac{1}{9}
\end{array}\right],\\
& A_9=\left[\begin{array}{ccc}
\frac{1}{9} & \frac{-7+3 \mathbf{i}\sqrt{7}}{84 \sqrt{3}} & \frac{\mathbf{i}}{6 \sqrt{5}}-\frac{1}{6 \sqrt{7}} \\
\frac{-7-3 \mathbf{i} \sqrt{7}}{84 \sqrt{3}} & \frac{4}{9} & -\frac{-5\mathbf{i}+\sqrt{15}}{30 \sqrt{2}} \\
-\frac{\mathbf{i}}{6 \sqrt{5}}-\frac{1}{6 \sqrt{7}} & -\frac{5 \mathbf{i}+\sqrt{15}}{30 \sqrt{2}} & -\frac{2}{9}
\end{array}\right].\\
&
\end{split}
\end{equation}
When $d=2$,
\begin{equation}
\begin{aligned}
& A_1=\left[\begin{array}{cc}
\frac{1}{4}-\frac{1}{2 \sqrt{2}}{ } & -\frac{1}{4 \sqrt{3}}+\frac{\mathbf{i}}{2 \sqrt{6}} \\
-\frac{1}{4 \sqrt{3}}-\frac{\mathbf{i}}{2 \sqrt{6}} & \frac{1}{4}+\frac{1}{2 \sqrt{2}}
\end{array}\right],  A_2=\left[\begin{array}{cc}
\frac{1}{4} & \frac{\sqrt{3}}{4} \\
\frac{\sqrt{3}}{4} & \frac{1}{4}
\end{array}\right], \\
& A_3=\left[\begin{array}{cc}
\frac{1}{4} & -\frac{1}{4 \sqrt{3}}-\frac{\mathbf{i}}{\sqrt{6}} \\
-\frac{1}{4 \sqrt{3}}+\frac{\mathbf{i}}{\sqrt{6}} & \frac{1}{4}
\end{array}\right],
 A_4=\left[\begin{array}{cc}
\frac{1}{4}+\frac{1}{2 \sqrt{2}} & -\frac{1}{4 \sqrt{3}}+\frac{\mathbf{i}}{2 \sqrt{6}} \\
-\frac{1}{4 \sqrt{3}}-\frac{\mathbf{i}}{2 \sqrt{6}} & \frac{1}{4}-\frac{1}{2 \sqrt{2}}
\end{array}\right] .
\end{aligned}
\end{equation}

\textbf{Example 2} Consider the $3\times 3\times 2$ quantum state,
\begin{equation}
\rho=\frac{1-x}{18} I+x|\varphi\rangle\langle\varphi|,
\end{equation}
where $|\varphi\rangle=\frac{1}{\sqrt{5}}[(|10\rangle+|21\rangle)|0\rangle
+(|00\rangle+|11\rangle+|22\rangle)|1\rangle]$.

From Theorem 1(i), when $c_{31}=0$ and $c_{32}=1$, we have $\|B^{3|12}\|_{tr}>\sqrt{\frac{1}{9}}$. We can detect the entanglement of $\rho$ for $0.496<x\leq1$. In Ref. \cite{ref30} the entanglement is only detected for $x=1$. Hence, our method can detect more entanglement. Using Theorem 2 and setting $c_{11}=c_{12}=c_{21}=c_{22}=c_{31}=0$ and $c_{32}=1$, we get that $f(x)=B(\rho)-\sqrt{\frac{2}{9}}$. $\rho$ is genuine entangled for $f(x) > 0$, i.e., $0.7152 < x\leq1$. While the criterion given in \cite{ref30} can not detect the genuine tripartite entanglement in this case.

\section{Multipartite entanglement criterion based on COB}
Now we consider multipartite entanglement in $n$-partite quantum systems. Let $\{A_{\alpha_{s}}^{(s)}\}_{\alpha_{s}=1}^{d_{s}^{2}}$ be the COB of the $s$th $d_{s}$-dimensional Hilbert space $H_{s}^{d_{s}}$. It is direct to verify that the set of operators $A_{\alpha_{1}}^{(1)}\otimes A_{\alpha_{2}}^{(2)}\otimes\cdots\otimes A_{\alpha_{n}}^{(n)}$ are linearly independent. Therefore, any $n$-partite quantum state $\rho\in H_{1}^{d_{1}}\otimes H_{2}^{d_{2}}\otimes \cdots\otimes H_{n}^{d_{n}}$ can be expressed as
\begin{equation}
\rho=d_{1}d_{2}\cdots d_{n}\sum\limits_{\alpha_{1}=1}^{d_{1}^{2}}
\sum\limits_{\alpha_{2}=1}^{d_{2}^{2}}\cdots\sum\limits_{\alpha_{n}=1}^{d_{n}^{2}}
\mu_{\alpha_{1}\alpha_{2}\cdots\alpha_{n}}A_{\alpha_{1}}^{(1)}\otimes A_{\alpha_{2}}^{(2)}\otimes\cdots\otimes A_{\alpha_{n}}^{(n)},
\end{equation}
where $\mu_{\alpha_{1}\alpha_{2}\cdots\alpha_{n}}=Tr(\rho A_{\alpha_{1}}^{(1)}\otimes A_{\alpha_{2}}^{(2)}\otimes\cdots\otimes A_{\alpha_{n}}^{(n)})$. Let $T^{(12\cdots n)}$ be the column vector with entries $\mu_{\alpha_{1}\alpha_{2}\cdots\alpha_{n}}$. An $n$-partite quantum state $\rho=\sum_{z}r_{z}\rho_{l_{1}\cdots l_{v_{1}}}^{z}\otimes\rho_{l_{v_{1}+1}\cdots l_{v_{1}+v_{2}}}^{z}\otimes\cdots\otimes\rho_{l_{n-v_{k}+1}\cdots l_{n}}^{z}$ is said to be $k$-separable under the partition $l_{1}\cdots l_{v_{1}}|l_{v_{1}+1}\cdots l_{v_{1}+v_{2}}|\cdots|l_{n-v_{k}+1}\cdots l_{n}$, where $\sum_{i=1}^{k}v_{i}=n$, $\rho_{l_{1}\cdots l_{v_{1}}}^{z}$, $\rho_{l_{v_{1}+1}\cdots l_{v_{1}+v_{2}}}^{z}$, $\ldots$, and $\rho_{l_{n-v_{k}+1}\cdots l_{n}}^{z}$ are pure states in $H_{l_{1}}^{d_{l_{1}}}\otimes\cdots\otimes H_{l_{v_{1}}}^{d_{l_{v_{1}}}}$, $H_{l_{v_{1}+1}}^{d_{l_{v_{1}+1}}}\otimes\cdots\otimes H_{l_{v_{1}+v_{2}}}^{d_{l_{v_{1}+v_{2}}}}$, $\ldots$, and $H_{l_{n-v_{k}+1}}^{d_{l_{n-v_{k}+1}}}\otimes\cdots\otimes H_{l_{n}}^{d_{l_{n}}}$, respectively, $l_{1}\neq l_{2}\neq\cdots\neq l_{n}\in\{1,2,\ldots,n\}$.
\begin{lemma}
For any $n$-partite quantum state $\rho$, the following inequality hold,
\begin{equation}
\|T^{(12\cdots n)}\|^{2}\leq\frac{1}{d_{1}d_{2}\cdots d_{n}}.
\end{equation}
\end{lemma}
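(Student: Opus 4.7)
The plan is to generalize the argument used in Lemma 1 and Lemma 2 directly to $n$ parties, exploiting the fact that the COB orthogonality relation $\mathrm{Tr}(A_{\alpha}^{(s)}A_{\beta}^{(s)})=\frac{1}{d_{s}}\delta_{\alpha\beta}$ factorizes cleanly across tensor products. The key identity I would aim for is
\begin{equation*}
\mathrm{Tr}(\rho^{2})=d_{1}d_{2}\cdots d_{n}\,\|T^{(12\cdots n)}\|^{2},
\end{equation*}
after which the lemma follows immediately from the physical inequality $\mathrm{Tr}(\rho^{2})\leq 1$, with equality if and only if $\rho$ is pure (mirroring the remarks at the end of Lemmas 1 and 2).

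First I would substitute the COB expansion of $\rho$ into $\mathrm{Tr}(\rho\rho^{\dagger})$. Since each $A_{\alpha_{s}}^{(s)}$ is Hermitian and the coefficients $\mu_{\alpha_{1}\cdots\alpha_{n}}=\mathrm{Tr}(\rho\,A_{\alpha_{1}}^{(1)}\otimes\cdots\otimes A_{\alpha_{n}}^{(n)})$ are real, $\rho^{\dagger}=\rho$ and the expansion can be used twice. The trace of a tensor product factorizes as $\mathrm{Tr}(X_{1}\otimes\cdots\otimes X_{n})=\prod_{s}\mathrm{Tr}(X_{s})$, so I would obtain
\begin{equation*}
\mathrm{Tr}(\rho^{2})=(d_{1}\cdots d_{n})^{2}\sum_{\alpha_{1},\ldots,\alpha_{n}}\sum_{\beta_{1},\ldots,\beta_{n}}\mu_{\alpha_{1}\cdots\alpha_{n}}\mu_{\beta_{1}\cdots\beta_{n}}\prod_{s=1}^{n}\mathrm{Tr}\bigl(A_{\alpha_{s}}^{(s)}A_{\beta_{s}}^{(s)}\bigr).
\end{equation*}
Applying the COB orthogonality relation on every factor collapses each product to $\prod_{s}\frac{1}{d_{s}}\delta_{\alpha_{s}\beta_{s}}$, which forces $\alpha_{s}=\beta_{s}$ for all $s$ and leaves exactly the sum of squares $\sum \mu_{\alpha_{1}\cdots\alpha_{n}}^{2}=\|T^{(12\cdots n)}\|^{2}$, with overall prefactor $(d_{1}\cdots d_{n})^{2}/(d_{1}\cdots d_{n})=d_{1}\cdots d_{n}$. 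This yields the displayed identity.

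Finally, dividing by $d_{1}\cdots d_{n}$ and using $\mathrm{Tr}(\rho^{2})\leq 1$ gives the stated bound. I do not anticipate any real obstacle: the only thing to be careful about is keeping track of the two copies of the prefactor $d_{1}\cdots d_{n}$ from the expansion against the $n$ factors of $1/d_{s}$ produced by orthogonality, which is purely bookkeeping. The argument is a clean inductive-flavored extension of Lemmas 1 and 2 and inherits the equality condition (pure states) from $\mathrm{Tr}(\rho^{2})=1$.
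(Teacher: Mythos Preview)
Your proposal is correct and follows essentially the same approach as the paper's own proof: the paper states the identity $\mathrm{Tr}(\rho^{2})=\mathrm{Tr}(\rho\rho^{\dagger})=d_{1}d_{2}\cdots d_{n}\|T^{(12\cdots n)}\|^{2}$ and then invokes $\mathrm{Tr}(\rho^{2})\leq 1$, with equality for pure states. If anything, you supply more detail than the paper does in justifying the key identity via the factorized COB orthogonality relations.
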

\begin{proof}
For any $n$-partite quantum state $\rho$, we have
$Tr(\rho^{2})=Tr(\rho\rho^{\dagger})=d_{1}d_{2}\cdots d_{n}\|T^{(12\cdots n)}\|^{2}$.
Since $Tr(\rho^{2})\leq1$,
\begin{equation}\label{29}
\|T^{(12\cdots n)}\|^{2}=\frac{1}{d_{1}d_{2}\cdots d_{n}}Tr(\rho^{2})\leq\frac{1}{d_{1}d_{2}\cdots d_{n}}.
\end{equation}
The upper bound is attained if and only if $\rho$ is a pure state.
\end{proof}

If an $n$-partite state $\rho\in H_{1}^{d_{1}}\otimes H_{2}^{d_{2}}\otimes \cdots\otimes H_{n}^{d_{n}}$ is fully separable. We define $B^{1|2|\cdots|n}$ is a $d_1^2\times \{d_2^2d_3^2\cdots d_n^2\}$ matrix with entries given by
$(B^{1|2|\cdots|n})_{\alpha_{1}\alpha_{2}\cdots\alpha_{n}}=\mu_{\alpha_{1}\alpha_{2}\cdots\alpha_{n}}$. For example, for $\rho\in H_{1}^{2}\otimes H_{2}^{2}\otimes H_{3}^{2}\otimes H_{4}^{2}$, we obtain
\begin{equation}
B^{1|2|3|4}=
\begin{bmatrix}
\mu_{1111}~~\mu_{1112}~~\cdots~~\mu_{1444}\\
\mu_{2111}~~\mu_{2112}~~\cdots~~\mu_{2444}\\
\mu_{3111}~~\mu_{3112}~~\cdots~~\mu_{3444}\\
\mu_{4111}~~\mu_{4112}~~\cdots~~\mu_{4444}
\end{bmatrix}.
\end{equation}

\begin{thm}
If the $n$-partite quantum state $\rho$ is fully separable, then we get
\begin{equation}
\|B^{1|2|\cdots|n}\|_{tr}\leq\sqrt{\frac{1}{d_1d_2\cdots d_n}}.
\end{equation}
\end{thm}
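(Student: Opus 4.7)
The strategy mirrors the bipartite and tripartite proofs already given. The plan is to first handle a single product state, then extend to convex combinations by the triangle inequality of the trace norm.

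First, I would assume $\rho$ is a pure product state of the form $\rho=\rho_{1}\otimes\rho_{2}\otimes\cdots\otimes\rho_{n}$ with each $\rho_{s}$ a state on $H_{s}^{d_{s}}$ written in COB form $\rho_{s}=d_{s}\sum_{\alpha_{s}=1}^{d_{s}^{2}}\mu_{\alpha_{s}}^{(s)}A_{\alpha_{s}}^{(s)}$. Using $\mu_{\alpha_{1}\alpha_{2}\cdots\alpha_{n}}=\mathrm{Tr}(\rho\,A_{\alpha_{1}}^{(1)}\otimes\cdots\otimes A_{\alpha_{n}}^{(n)})$ and the tensor-product structure, the key factorization
\[
\mu_{\alpha_{1}\alpha_{2}\cdots\alpha_{n}}=\mu_{\alpha_{1}}^{(1)}\mu_{\alpha_{2}}^{(2)}\cdots\mu_{\alpha_{n}}^{(n)}
\]
follows. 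This lets me identify
\[
B^{1|2|\cdots|n}=T^{(1)}\bigl(T^{(2)}\otimes T^{(3)}\otimes\cdots\otimes T^{(n)}\bigr)^{\dagger},
\]
where $T^{(s)}$ denotes the column vector with entries $\mu_{\alpha_{s}}^{(s)}$. In other words, $B^{1|2|\cdots|n}$ is rank one.

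Next I would apply the standard identities $\bigl\||a\rangle\langle b|\bigr\|_{tr}=\|a\|\,\|b\|$ and $\|u\otimes v\|=\|u\|\,\|v\|$ to obtain
\[
\|B^{1|2|\cdots|n}\|_{tr}=\|T^{(1)}\|\,\|T^{(2)}\|\cdots\|T^{(n)}\|.
\]
Each factor is then bounded by Lemma~1 applied to the single-system state $\rho_{s}$, giving $\|T^{(s)}\|\le\sqrt{1/d_{s}}$. Multiplying yields the desired $\sqrt{1/(d_{1}d_{2}\cdots d_{n})}$ for every product state (pure or mixed).

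Finally, for a general fully separable state $\rho=\sum_{z}p_{z}\rho_{1}^{z}\otimes\rho_{2}^{z}\otimes\cdots\otimes\rho_{n}^{z}$, the linearity of $\mu_{\alpha_{1}\cdots\alpha_{n}}$ in $\rho$ gives $B^{1|2|\cdots|n}(\rho)=\sum_{z}p_{z}B^{1|2|\cdots|n}(\rho_{1}^{z}\otimes\cdots\otimes\rho_{n}^{z})$. The triangle inequality for $\|\cdot\|_{tr}$ combined with $\sum_{z}p_{z}=1$ propagates the product-state bound:
\[
\|B^{1|2|\cdots|n}(\rho)\|_{tr}\le\sum_{z}p_{z}\sqrt{\tfrac{1}{d_{1}d_{2}\cdots d_{n}}}=\sqrt{\tfrac{1}{d_{1}d_{2}\cdots d_{n}}}.
\]
The only potentially delicate point is recognizing the rank-one/Kronecker-product structure of $B^{1|2|\cdots|n}$ for product states and justifying the multiplicativity of the norm across the tensor factors; once this is in place, the remaining steps are direct applications of Lemma~1 and the triangle inequality, so I expect no substantive obstacle.
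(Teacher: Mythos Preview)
Your proposal is correct and follows essentially the same route as the paper: the paper also writes the fully separable state as a convex combination of product states, identifies $B^{1|2|\cdots|n}$ on each product term as the rank-one (Kronecker) product $T^{(1)}\otimes(T^{(2)})^{\dagger}\otimes\cdots\otimes(T^{(n)})^{\dagger}$, bounds each factor by Lemma~1, and concludes via the triangle inequality. The only cosmetic difference is that the paper cites the multiplicativity $\|A\otimes B\|_{tr}=\|A\|_{tr}\|B\|_{tr}$ where you invoke $\||a\rangle\langle b|\|_{tr}=\|a\|\,\|b\|$; these are equivalent in this rank-one setting.
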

\begin{proof}
If the $n$-partite state $\rho$ is fully separable, it can be expressed as
$\sum\limits_{z}r_{z}\rho_{1}^{z}\otimes\rho_{2}^{z}\otimes\cdots\otimes\rho_{n}^{z}$, $0<r_{z}\leq1$, $\sum\limits_{z}r_{z}=1$, where
\begin{equation}
\rho_{l}^{z}=d_{l}\sum\limits_{\alpha_{l=1}}^{d_{l}^{2}}
\mu_{\alpha_{l}}^{z}A_{\alpha_{l}}^{(l)}, ~~~l\in\{1,2,\ldots,n\}.
\end{equation}
Let $T^{(l)}$ be the column vectors with entries of $\mu_{\alpha_{l}}^{z}$. Then we obtain
\begin{equation}
\begin{split}
\|B^{1|2|\cdots|n}\|_{tr}&\leq\sum\limits_{z}r_{z}\|T^{(1)}\otimes(T^{(2)})^{\dag}\otimes\cdots\otimes(T^{(n)})^{\dag}\|_{tr}=\sum\limits_{z}r_{z}\|T^{(1)}\|_{tr}\|(T^{(2)})^{\dag}\|_{tr}\cdots\|(T^{(n)})^{\dag}\|_{tr}\\
&=\sum\limits_{z}r_{z}\|T^{(1)}\|\|T^{(2)}\|\cdots\|T^{(n)}\|\leq\sum\limits_{z}r_{z}\sqrt{\frac{1}{d_1}}\sqrt{\frac{1}{d_2}}\cdots\sqrt{\frac{1}{d_n}}=\sqrt{\frac{1}{d_1d_2\cdots d_n}},
\end{split}
\end{equation}
where we have used $\|A\otimes B\|_{tr}=\|A\|_{tr}\|B\|_{tr}$ for matrices $A$ and $B$ and Lemma 1 in the second inequality.
\end{proof}

If an $n$-partite state $\rho\in H_{1}^{d_{1}}\otimes H_{2}^{d_{2}}\otimes \cdots\otimes H_{n}^{d_{n}}$ is biseparable under the bipartition $l_{1}|l_{2}\cdots l_{n}$, we define
a $d_{l_1}^{2}\times \{d_{l_2}^{2}\cdots d_{l_n}^{2}\}$ matrix $B^{l_{1}|l_{2}\cdots l_{n}}$ with entries given by
$(B^{l_{1}|l_{2}\cdots l_{n}})_{\alpha_{1}\alpha_{2}\cdots\alpha_{n}}
=\mu_{\alpha_{1}\alpha_{2}\cdots\alpha_{n}}$. For example, for $\rho\in H_{1}^{2}\otimes H_{2}^{2}\otimes H_{3}^{2}\otimes H_{4}^{2}$, we have
\begin{equation}
B^{2|134}=
\begin{bmatrix}
\mu_{1111}~~\mu_{1112}~~\cdots~~\mu_{4144}\\
\mu_{1211}~~\mu_{1212}~~\cdots~~\mu_{4244}\\
\mu_{1311}~~\mu_{1312}~~\cdots~~\mu_{4344}\\
\mu_{1411}~~\mu_{1412}~~\cdots~~\mu_{4444}
\end{bmatrix}.
\end{equation}

If $\rho$ is $k$-separable under the bipartition $l_{1}\cdots l_{v_{1}}|l_{v_{1}+1}\cdots l_{v_{1}+v_{2}}|\cdots|l_{n-v_{k}+1}\cdots l_{n}$, we denote $B^{l_{1}\cdots l_{v_{1}}|l_{v_{1}+1}\cdots l_{v_{1}+v_{2}}|\cdots|l_{n-v_{k}+1}\cdots l_{n}}=(B^{l_{1}\cdots l_{v_{1}}|l_{v_{1}+1}\cdots l_{v_{1}+v_{2}}|\cdots|l_{n-v_{k}+1}\cdots l_{n}})_{\alpha_{l_{1}}\alpha_{l_{2}}\cdots\alpha_{l_{n}}}=\mu_{\alpha_{l_{1}}\alpha_{l_{2}}\cdots\alpha_{l_{n}}}$ is a $\{d_{l_{n-v_{k}+1}}^{2}\cdots d_{l_{n-1}}^{2}d_{l_{n}}\}\times\{d_{l_{1}}^{2}d_{l_{2}}^{2}\cdots d_{l_{n-1}}^{2}\}$ matrix. For example, for $\rho\in H_{1}^{2}\otimes H_{2}^{2}\otimes H_{3}^{2}\otimes H_{4}^{2}$, we obtain
\begin{equation}
B^{12|34}=
\begin{bmatrix}
&\mu_{1111}~~&\mu_{1113}~~\cdots~~&\mu_{4411}~~&\mu_{4413}\\
&\mu_{1121}~~&\mu_{1123}~~\cdots~~&\mu_{4421}~~&\mu_{4423}\\
&\vdots~~&\vdots~~~~~~~~&\vdots~&\vdots\\
&\mu_{1142}~~&\mu_{1144}~~\cdots~~&\mu_{4442}~~&\mu_{4444}\\
\end{bmatrix}.
\end{equation}

\begin{thm} If an $n$-partite quantum state is $k$-separable under the bipartition $l_{1}\cdots l_{v_{1}}|l_{v_{1}+1}\cdots\\ l_{v_{1}+v_{2}}|\cdots|l_{n-v_{k}+1}\cdots l_{n}$, then the following inequalities hold,\\
(i) $\|B^{l_{1}|l_{2}\cdots l_{n}}\|_{tr}\leq \sqrt{\frac{1}{d_{l_{1}}d_{l_{2}}\cdots d_{l_{n}}}}$,\\
(ii) $\|B^{l_{1}\cdots l_{v_{1}}|l_{v_{1}+1}\cdots l_{v_{1}+v_{2}}|\cdots|l_{n-v_{k}+1}\cdots l_{n}}\|_{tr}\leq \sqrt{\frac{1}{d_{l_{1}}d_{l_{2}}\cdots d_{l_{n-1}}}}.$\\
\end{thm}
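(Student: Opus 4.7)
The plan is to mimic the proofs of Theorem~1 and Theorem~2: in each case, invoke the appropriate separability decomposition of $\rho$, use the resulting product factorization of the COB correlation tensor $\mu$, and then bound the matrix $B$ by combining trace norm properties (triangle inequality, tensor product multiplicativity $\|A\otimes B\|_{tr}=\|A\|_{tr}\|B\|_{tr}$, the outer product identity $\||a\rangle\langle b|\|_{tr}=\||a\rangle\|\||b\rangle\|$, and $\|A\|_{tr}\leq \sqrt{\min\{m,n\}}\|A\|_F$) with the norm bounds from Lemma~1 and Lemma~3.

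For part~(i), $k$-separability with $l_1$ in its own group implies biseparability under the bipartition $l_1|l_2\cdots l_n$, so I would write $\rho=\sum_z r_z\,\rho_{l_1}^{z}\otimes\rho_{l_2\cdots l_n}^{z}$ with $0<r_z\leq 1$, $\sum_z r_z=1$. The linearity of $\mu$ gives $\mu_{\alpha_{l_1}\alpha_{l_2}\cdots\alpha_{l_n}}=\sum_z r_z\,\mu_{\alpha_{l_1}}^{z}\,\mu_{\alpha_{l_2}\cdots\alpha_{l_n}}^{z}$, so
\begin{equation*}
B^{l_1|l_2\cdots l_n}=\sum_z r_z\,T^{(l_1),z}\bigl(T^{(l_2\cdots l_n),z}\bigr)^{\dagger}.
\end{equation*}
Applying the triangle inequality followed by the outer product identity, then Lemma~1 to bound $\|T^{(l_1),z}\|\leq\sqrt{1/d_{l_1}}$ and Lemma~3 to bound $\|T^{(l_2\cdots l_n),z}\|\leq\sqrt{1/(d_{l_2}\cdots d_{l_n})}$, and finally using $\sum_z r_z=1$, yields the stated bound.

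For part~(ii), the $k$-separable decomposition
\begin{equation*}
\rho=\sum_z r_z\,\rho^{z}_{l_1\cdots l_{v_1}}\otimes\rho^{z}_{l_{v_1+1}\cdots l_{v_1+v_2}}\otimes\cdots\otimes\rho^{z}_{l_{n-v_k+1}\cdots l_n}
\end{equation*}
makes $\mu$ factor as a product of $k$ group-wise correlation tensors $\mu^{(G_i),z}$. Under the reshaping prescribed just before the theorem, $B^{l_{1}\cdots l_{v_{1}}|\cdots|l_{n-v_{k}+1}\cdots l_{n}}$ becomes (via the triangle inequality applied to the $r_z$ sum) a convex combination of matrices built as tensor products of the group vectors $T^{(G_i),z}$. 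Using $\|A\otimes B\|_{tr}=\|A\|_{tr}\|B\|_{tr}$ on the product and Lemma~3 on each group vector to get $\|T^{(G_i),z}\|\leq\sqrt{1/\prod_{l\in G_i}d_l}$, together with one application of $\|A\|_{tr}\leq\sqrt{\min\{m,n\}}\|A\|_F$ on the factor associated with the last group so as to absorb a single $\sqrt{d_{l_n}}$, gives the bound $\sqrt{1/(d_{l_1}\cdots d_{l_{n-1}})}$.

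The main obstacle is part~(ii): carefully tracking how the nonstandard reshaping of $B^{l_{1}\cdots l_{v_{1}}|\cdots|l_{n-v_{k}+1}\cdots l_{n}}$ (whose row dimension contains $d_{l_n}$ rather than $d_{l_n}^2$) interacts with the group factorization of $\mu$, and identifying precisely which factor and which side of the matrix to apply the Frobenius-to-trace bound to so that exactly one $\sqrt{d_{l_n}}$ survives uncancelled, producing the missing $d_{l_n}$ in the denominator relative to part~(i).
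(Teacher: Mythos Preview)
Your proposal is correct and matches the paper's approach essentially step for step. For part~(ii), the paper resolves your flagged obstacle exactly as you anticipate: it reshapes the last group's column vector $T^{(l_{n-v_k+1}\cdots l_n)}$ into a matrix $(T_1\;T_2\;\cdots\;T_{d_{l_n}})$ with $d_{l_n}$ columns (by splitting the index $\alpha_{l_n}$ into $d_{l_n}$ blocks of size $d_{l_n}$), tensors this on the left with the row vectors $(T^{(G_1)})^{\dagger},\ldots,(T^{(G_{k-1})})^{\dagger}$, and then applies $\|A\|_{tr}\leq\sqrt{d_{l_n}}\|A\|$ to that last factor together with Lemma~3 on each group to obtain the bound $\sqrt{1/(d_{l_1}\cdots d_{l_{n-1}})}$.
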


\begin{proof}
(i) If the $n$-partite state $\rho$ is biseparable under the bipartition $l_{1}|l_{2}\cdots l_{n}$, it can be expressed as
$\sum\limits_{z}r_{z}\rho_{l_{1}}^{z}\otimes\rho_{l_{2}\cdots l_{n}}^{z}$, $0<r_{z}\leq1$, $\sum\limits_{z}r_{z}=1$, where
\begin{equation}
\rho_{l_{1}}^{z}=d_{l_{1}}\sum\limits_{\alpha_{l_{1}=1}}^{d_{l_{1}}^{2}}
\mu_{\alpha_{l_{1}}}^{z}A_{\alpha_{l_{1}}}^{(l_{1})},
\end{equation}
\begin{equation}
\rho_{l_{2}\cdots l_{n}}^{z}=d_{l_{2}}\cdots d_{l_{n}}\sum\limits_{q=2}^{n}\sum\limits_{\alpha_{l_{q}}=1}^{d_{l_{q}}^{2}}
\mu_{\alpha_{l_{2}}\cdots\alpha_{l_{n}}}^{z}A_{\alpha_{l_{2}}}^{(l_{2})}\otimes \cdots\otimes A_{\alpha_{l_{n}}}^{(l_{n})}.
\end{equation}
Let $T^{(l_1)}$ and $T^{(l_{2}\cdots l_{n})}$ be the column vectors with entries of $\mu_{\alpha_{l_{1}}}^{z}$ and $\mu_{\alpha_{l_{2}}\cdots\alpha_{l_{n}}}^{z}$, respectively. We have

\begin{equation}
\begin{split}
\|B^{l_{1}|l_{2}\cdots l_{n}}\|_{tr}&\leq\sum_zr_z\|(T^{(l_1)})(T^{(l_{2}\cdots l_{n})})^{\dagger}\|_{tr}=\sum_zr_z\|T^{(l_1)}\|\|T^{(l_{2}\cdots l_{n})}\|\\
&\leq\sum_zr_z\sqrt{\frac{1}{d_{l_{1}}}}\sqrt{\frac{1}{d_{l_{2}}\cdots d_{l_{n}}}}=\sqrt{\frac{1}{d_{l_{1}}d_{l_{2}}\cdots d_{l_{n}}}},\\
\end{split}
\end{equation}
where we have used $\||a\rangle\langle b|\|_{tr}=\||a\rangle\|\||b\rangle\|$ for vectors $|a\rangle$ and $|b\rangle$, Lemma 1 and Lemma 3 in the second inequality.\\
(ii) If $n$-partite quantum state $\rho$ is separable under the partition $l_{1}\cdots l_{v_{1}}|l_{v_{1}+1}\cdots l_{v_{1}+v_{2}}|\cdots\\|l_{n-v_{k}+1}\cdots l_{n}$, then $\rho$ can be expressed as
\begin{equation}
\rho=\sum_{z}r_{z}\rho_{l_{1}\cdots l_{v_{1}}}^{z}\otimes\rho_{l_{v_{1}+1}\cdots l_{v_{1}+v_{2}}}^{z}\otimes\cdots\otimes\rho_{l_{n-v_{k}+1}\cdots l_{n}}^{z},
\end{equation}
where
\begin{equation}
\rho_{l_{1}\cdots l_{v_{1}}}^{z}=d_{l_1}\cdots d_{l_{v_{1}}}\sum_{q=1}^{v_1}\sum_{\alpha_{l_q}=1}^{d_{l_q}^{2}}\mu_{\alpha_{l_{1}}\cdots\alpha_{l_{v_{1}}}}^{z}A_{\alpha_{l_{1}}}^{(l_1)}\otimes\cdots\otimes A_{\alpha_{l_{v_1}}}^{(l_{v_1})},
\end{equation}
\begin{equation}
\rho_{l_{v_1+1}\cdots l_{v_{1}+v_2}}^{z}=d_{l_{v_1+1}}\cdots d_{l_{v_{1}+v_{2}}}\sum_{q=v_1+1}^{v_1+v_2}\sum_{\alpha_{l_q}=1}^{d_{l_q}^{2}}\mu_{\alpha_{l_{v_1+1}}\cdots\alpha_{l_{v_{1}+v_2}}}^{z}A_{\alpha_{l_{v_1+1}}}^{(l_{v_{1}+1})}\otimes\cdots\otimes A_{\alpha_{l_{v_1+v_2}}}^{(l_{v_1+v_2})},
\end{equation}
$\cdots$
\begin{equation}
\rho_{l_{n-v_{k}+1}\cdots l_{n}}^{z}=d_{l_{n-v_{k}+1}}\cdots d_{l_{n}}\sum_{q=n-v_{k}+1}^{n}\sum_{\alpha_{l_q}=1}^{d_{l_q}^{2}}\mu_{\alpha_{l_{n-v_{k}+1}}\cdots\alpha_{l_{n}}}^{z}
A_{\alpha_{l_{n-v_{k}+1}}}^{(l_{n-v_{k}+1})}\otimes\cdots\otimes A_{\alpha_{l_{n}}}^{(l_{n})}.
\end{equation}
Let $T^{(l_{1}\cdots l_{v_{1}})}$, $T^{(l_{v_1+1}\cdots l_{v_{1}+v_2})}$, $\ldots$, $T^{(l_{n-v_{k}+1}\cdots l_{n})}$ be the column vectors with entries of $\mu_{\alpha_{l_{1}}\cdots\alpha_{l_{v_{1}}}}^{z}$, $\mu_{\alpha_{l_{v_1+1}}\cdots\alpha_{l_{v_{1}+v_2}}}^{z}$, $\ldots$, $\mu_{\alpha_{l_{n-v_{k}+1}}\cdots\alpha_{l_{n}}}^{z}$, respectively. And $T_{j}^{(l_{n-v_{k}+1}\cdots l_{n})}$ be the column vectors composed with a part of $\mu_{\alpha_{l_{n-v_{k}+1}}\cdots\alpha_{l_{n}}}^{z}$, $\alpha_{l_{n}}=(j-1)d_{l_n}+1,\ldots,jd_{l_n}$, $j=1,2,\ldots,d_{l_n}$. We have
\begin{equation}
\begin{split}
B^{l_{1}\cdots l_{v_{1}}|l_{v_{1}+1}\cdots l_{v_{1}+v_{2}}|\cdots|l_{n-v_{k}+1}\cdots l_{n}}=&\sum_{z}r_z(T^{(l_{1}\cdots l_{v_{1}})})^{\dag}\otimes(T^{(l_{v_1+1}\cdots l_{v_{1}+v_2})})^{\dag}\otimes\cdots\otimes(T^{(l_{n-v_{k}-v_{k-1}+1}\cdots l_{n-v_k})})^{\dag}\\
&\otimes\begin{pmatrix}
T_{1}^{(l_{n-v_{k}+1}\cdots l_{n})}~~T_{2}^{(l_{n-v_{k}+1}\cdots l_{n})}~~\cdots~~T_{d_{l_n}}^{(l_{n-v_{k}+1}\cdots l_{n})}\\
\end{pmatrix}
\end{split}
\end{equation}
Then we derive
\begin{equation}
\begin{split}
&~~~~\|B^{l_{1}\cdots l_{v_{1}}|l_{v_{1}+1}\cdots l_{v_{1}+v_{2}}|\cdots|l_{n-v_{k}+1}\cdots l_{n}}\|_{tr}\\
&\leq\sum_{z}r_z\|(T^{(l_{1}\cdots l_{v_{1}})})^{\dag}\|_{tr}\|(T^{(l_{v_1+1}\cdots l_{v_{1}+v_2})})^{\dag}\|_{tr}\cdots\|(T^{(l_{n-v_{k}-v_{k-1}+1}\cdots l_{n-v_k})})^{\dag}\|_{tr}\\
&~~~~~~~\|\begin{pmatrix}
T_{1}^{(l_{n-v_{k}+1}\cdots l_{n})}~~T_{2}^{(l_{n-v_{k}+1}\cdots l_{n})}~~\cdots~~T_{d_{l_n}}^{(l_{n-v_{k}+1}\cdots l_{n})}\\
\end{pmatrix}\|_{tr}\\
&\leq\sum_{z}r_z\|T^{(l_{1}\cdots l_{v_{1}})}\|\|T^{(l_{v_1+1}\cdots l_{v_{1}+v_2})}\|\cdots\|T^{(l_{n-v_{k}-v_{k-1}+1}\cdots l_{n-v_k})}\|\sqrt{d_{l_n}}\|T^{(l_{n-v_{k}+1}\cdots l_{n})}\|\\
&\leq\sum_{z}r_z\sqrt{\frac{1}{d_{l_1}\cdots d_{l_{v_1}}}}\sqrt{\frac{1}{d_{l_{v_1+1}}\cdots d_{l_{v_1+v_2}}}}\cdots\sqrt{\frac{1}{d_{l_{n-v_{k}-v_{k-1}+1}}\cdots d_{l_{n-v_k}}}}\sqrt{d_{l_n}}\sqrt{\frac{1}{d_{l_{n-v_{k}+1}}\cdots d_{l_{n}}}}\\
&=\sqrt{\frac{1}{d_{l_1}\cdots d_{l_{n-1}}}},
\end{split}
\end{equation}
where we have used $\|A\otimes B\|_{tr}=\|A\|_{tr}\|B\|_{tr}$ for matrices $A$ and $B$,
$\|A^{m\times n}\|_{tr}\leq\sqrt{\mathrm{min}\{m,n\}}\\\|A^{m\times n}\|$
 in the second inequality and Lemma 3 in the third inequality.
\end{proof}

In the following example, we use the four matrices from (\ref{33}).

\textbf{Example 3} Consider the four-qubit state $\rho\in H_1^{2}\otimes H_2^{2}\otimes H_3^{2}\otimes H_4^{2}$,
\begin{equation}
\rho=\frac{1-x}{16}\mathbb{I}_{16}+x|\psi\rangle\langle\psi|,~~0\leq x\leq1,
\end{equation}
where $|\psi\rangle=\frac{1}{\sqrt{2}}(|0000\rangle+|1111\rangle)$, $\mathbb{I}_{16}$ is the $16\times16$ identity matrix. From Theorem 4, we obtain that if $f_{2}(x)=\|B^{l_{1}|l_{2}l_{3}l_{4}}\|_{tr}-\frac{1}{4}
=\frac{3|x|}{8}+\frac{\sqrt{3x^{2}+1}}{16}-\frac{1}{4}>0$, i.e., $0.4545<x\leq1$, $\rho$ is entangled under the bipartition $l_{1}|l_{2}l_{3}l_{4}$. By the Theorem 3 in \cite{ref28}, $\rho$ is entangled under the bipartition $l_{1}|l_{2}l_{3}l_{4}$ if $g_{3}(x)=9x^{2}-4>0$, i.e., $0.6667<x\leq1$.

Using Theorem 4, we have obtained that $\rho$ is entangled under the partition $l_1l_2|l_3l_4$ for $f_3(x)=\frac{\sqrt{x^2+1}}{16}+\frac{7\sqrt{2}|x|}{16}-\sqrt{\frac{1}{8}}>0$, i.e., $0.4602<x\leq1$. The Theorem 4 in \cite{ref11} says that $\rho$ is entangled under the partition $l_1l_2|l_3l_4$ for $g_4(x)=\sqrt{1+x^2}+2\sqrt{2}x+\frac{x-x^2}{1+x^2}-4>0$, i.e., $0.915<x\leq1$. The result in \cite{ref29} implies that $\rho$ is entangled (not biseparable) if $\frac{17}{21}$ $(\approx 0.8095)<x\leq1$. Thus, our results outperform these existing results in detecting the entanglement, see Figure 2.
\begin{figure}[h!]
  \centering
  \includegraphics[width=17cm]{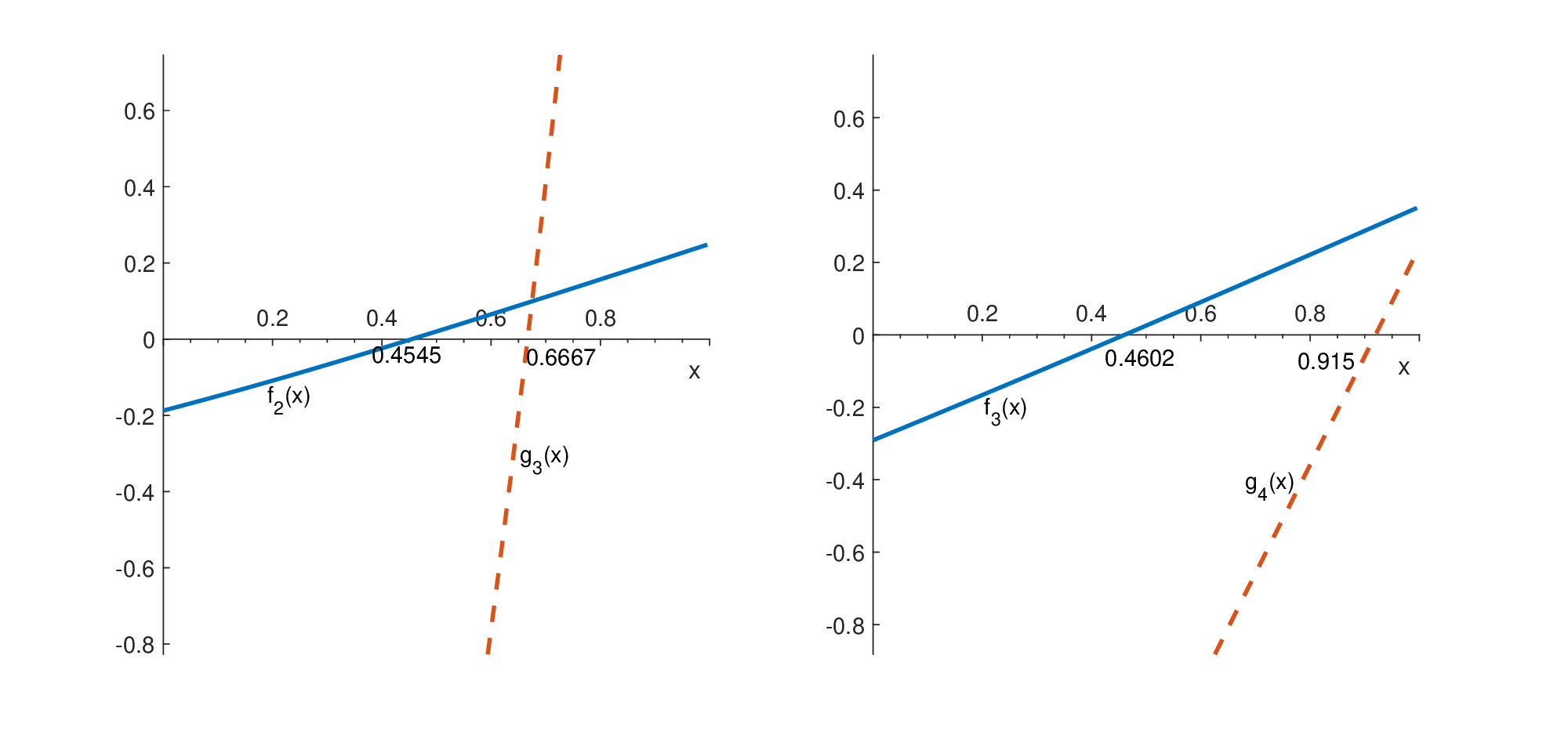}
  \caption{$f_{2}(x)$ from Theorem 4 (solid blue line), $g_{3}(x)$ from Theorem 3 in \cite{ref28} (dotted red line), $f_{3}(x)$ from Theorem 3 (solid blue line), $g_{4}(x)$ from Theorem 4 in \cite{ref11} (dotted red line).}\label{1}
\end{figure}

\textbf{Example 4} Consider the quantum state $\rho\in H_1^{2}\otimes H_2^{2}\otimes H_3^{2}\otimes H_4^{2}$,
\begin{equation}
\rho=\frac{1-x}{16}\mathbb{I}_{16}+x|W\rangle\langle W|,~~0\leq x\leq1,
\end{equation}
where $|W\rangle=\frac{1}{2}(|0001\rangle+|0010\rangle+|0100\rangle+|1000\rangle)$. By using Theorem 4, we obtain $\rho$ is entanglement under the bipartition $l_1|l_2l_3l_4$ for $f_4(x)=B^{l_1|l_2l_3l_4}-\frac{1}{4}>0$, i.e., $0.4891<x\leq1$. Theorem 3 in \cite{ref11} shows that $\rho$ is entanglement under the partition $l_1|l_2l_3l_4$ for $g_5(x)=\frac{4+2x^2}{2\sqrt{4+x^2}}+x-2>0$, i.e., $0.783<x\leq1$. Figure 3 shows that our method can detects more entanglement.
\begin{figure}[h]
  \centering
  \includegraphics[width=15cm]{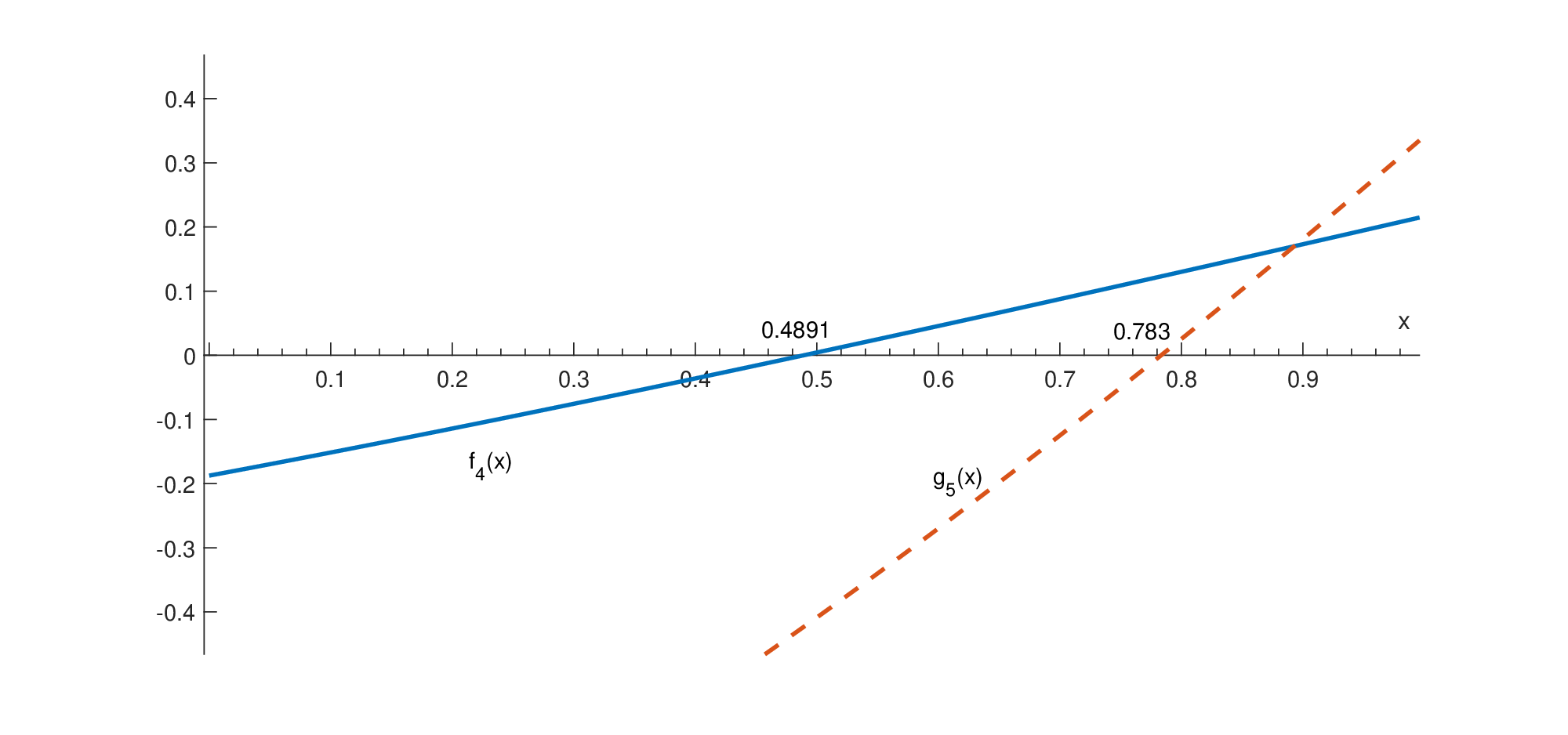}
  \caption{$f_{4}(x)$ from Theorem 3 (solid blue line), $g_{5}(x)$ from Theorem 3 in \cite{ref11} (dotted red line).}\label{1}
\end{figure}

\section{Conclusion}
By using GSICM, COB and their relations, we have derived finer upper bounds for the norms of correlation probabilities and presented the criteria for detecting the genuine tripartite entanglement. And then we have studied multipartite entanglement for arbitrary dimensional multipartite systems. Detailed examples demonstrate that our criteria can detect better genuine entanglement and multipartite entanglement.
The method of detecting entanglement with GSICM depends on some local measurements. While GSICM has been shown to possess operational advantages in many quantum tasks over projective measurements. Based on the relationship between GSICM and COB, the trace relationship of COB depends only on the dimensions. Thus the detection of entanglement in COB could be relatively easy to be implemented experimentally.
Furthermore, the authors in \cite{ref31} show that GSICM can be applied to study different types of quantum correlations. Therefore, our approach may highlight further investigations on detection of other quantum correlations.

\medskip
\noindent\textbf{ CRediT authorship contribution statement}

{\bf Hui Zhao, Jia Hao:} Formal analysis, Writing, Calculation and figure. {\bf Jing Li, Shao-Ming Fei, Naihuan Jing, Zhi-Xi Wang:} Writing-review and editing.\\

\noindent\textbf{Declaration of competing interest}

The authors declare that they have no known competing financial interests or personal relationships that could have appeared to
influence the work reported in this paper.\\

\noindent\textbf{Data availability}

No data was used for the research described in the article.\\

\noindent\textbf {Acknowledgements}
This work is supported by the National Key R\&D Program of China under Grant No.(2022YFB3806000), National Natural Science Foundation of China under Grants (12272011, 12075159, 12126351 and 12171044), Beijing Natural Science Foundation (Z190005), the Academician Innovation Platform of Hainan Province.


\end{document}